\documentclass[letterpaper,twocolumn,10pt]{article}
\usepackage{usenix-2020-09}

\usepackage{tikz}
\usetikzlibrary{decorations,shapes,backgrounds,calc}
\tikzstyle{msg}=[->,black,>=latex]
\tikzstyle{rubber}=[|<->|]
\tikzstyle{announce}=[draw=blue,fill=blue,shape=diamond,right,minimum
  height=2mm,minimum width=1.6667mm,inner sep=0pt]
\tikzstyle{decide}=[draw=red,fill=red,shape=isosceles triangle,right,minimum
  height=2mm,minimum width=1.6667mm,inner sep=0pt,shape border rotate=90]
\tikzstyle{cast}=[draw=green!50!black,fill=green!50!black,shape=circle,left,minimum
  height=2mm,minimum width=1.6667mm,inner sep=0pt]

\usepackage{alltt}

\usepackage{amsmath}
\usepackage{mathtools}
\usepackage{algorithm2e} 
\usepackage{tabularx}
\usepackage{hyperref}
\usepackage{subcaption}
\usepackage{makecell}
\usepackage{multirow}
\usepackage{graphicx}
\usepackage{epstopdf}

\usepackage{amssymb}
\usepackage{rounddiag}
\usepackage{float}

\usepackage{booktabs}
\usepackage{technote}
\usepackage{homodel}
\usepackage{enumerate}

\usepackage{algorithmicplus}

\usepackage[english]{babel}
\usepackage{amsthm}

\usepackage{filecontents}

\newcommand{\alterbft}{AlterBFT\xspace}
\newcommand{\fastalter}{FastAlterBFT\xspace}
\newcommand{\mymsg}[2]{\ensuremath{\langle #1 \rangle_{#2}}}

\usepackage{enumitem}

\usepackage{caption}
\usepackage{hyperref}

\newconstruct{\FOREACH}{\textbf{for each}}{\textbf{do}}{\ENDFOREACH}{}


\newcommand\Broadcast{\textbf{broadcast}}
\newcommand\Forward{\textbf{forward}}

\newident{noop}
\newconstruct{\UPON}{\textbf{upon}}{\textbf{do}}{\ENDUPON}{}
\newconstruct{\WHEN}{\textbf{when}}{\textbf{do}}{\ENDWHEN}{}

\newconstruct{\FUNCTION}{\textbf{Function}}{\textbf{:}}{\ENDFUNCTION}{}
\newconstruct{\PROCEDURE}{\textbf{Procedure}}{\textbf{:}}{\ENDPROCEDURE}{}

\def\PAR#1{\par\medskip\noindent\textbf{#1}}

\newcommand{\typeS}{$\mathcal{S}$\xspace}
\newcommand{\typeL}{$\mathcal{L}$\xspace}

\newtheorem{remark}{Remark}[section]

\usepackage{booktabs}
\newcommand{\ra}[1]{\renewcommand{\arraystretch}{#1}}

\begin{document}


\date{}

\title{\Large \bf Message Size Matters: \alterbft's Approach to\\ Practical Synchronous BFT in Public Clouds}

\author{
    {\rm Nenad Milo\v{s}evi\'c}\textsuperscript{1} \quad
    {\rm Daniel Cason}\textsuperscript{2} \quad
    {\rm Zarko Milo\v{s}evi\'c}\textsuperscript{2} \quad
    {\rm Robert Soulé}\textsuperscript{3} \quad
    {\rm Fernando Pedone}\textsuperscript{1} \\[6pt] 
    \textsuperscript{1}Universit\`{a} della Svizzera italiana (USI) \quad
    \textsuperscript{2}Informal Systems \quad
    \textsuperscript{3}Yale University
}

\maketitle

\begin{abstract}
  Synchronous consensus protocols offer a significant advantage over
their asynchronous and partially synchronous counterparts by providing
higher fault tolerance---an essential benefit in distributed systems,
like blockchains, where participants may have incentives to
act maliciously. However, despite this advantage, synchronous
protocols are often met with skepticism due to concerns about their
performance, as the latency of synchronous protocols is tightly linked
to a conservative time bound for message delivery.

This paper introduces AlterBFT, a new Byzantine fault-tolerant
consensus protocol. The key idea behind AlterBFT lies in the new model
we propose, called hybrid synchronous system model. The new model is
inspired by empirical observations about network behavior in the public
cloud environment and combines elements from the synchronous and
partially synchronous models. Namely, it distinguishes between small
messages that respect time bounds and large messages that may violate
bounds but are eventually timely. Leveraging this observation,
AlterBFT achieves up to 15$\times$ lower latency than state-of-the-art
synchronous protocols while maintaining similar throughput and the
same fault tolerance. Compared to partially synchronous protocols,
AlterBFT provides higher fault tolerance, higher throughput, and
comparable latency.

  \end{abstract}

\section{Introduction}
\label{sec:introduction}

State machine replication (SMR) \cite{lamport-time-clocks,schneider90}
is a fundamental approach to fault tolerance used in many critical
applications and services.  Most deployed systems assume \emph{crash failures} \cite{burrows06,
  glendenning11,corbett12}, which do not
account for a range of real-world issues, including malicious
behavior, software bugs, bit flips, and more.  However, with the rise
of blockchain technology, interest in protocols that tolerate
arbitrary failures, i.e., \emph{Byzantine fault-tolerant} (BFT)
protocols~\cite{byz-gen}, has grown significantly. In blockchain
systems, participants may be incentivized to act maliciously, and
thus, BFT protocols play a crucial role in maintaining the security
and integrity of the system.

BFT protocols differ in their assumptions about the underlying
network.  Two common characterizations are the synchronous and the
partially synchronous system models.  In synchronous systems, there is
a known \emph{permanent} upper time-bound $\Delta$ for messages to
be sent from one participant to another.  The partially synchronous
system model assumes an \emph{eventual} upper time-bound which holds
after an unknown moment in the execution, referred to as the global
stabilization time (GST) \cite{dwork88}.

\PAR{Motivation.}
With respect to fault tolerance, synchronous protocols have a distinct
advantage over partially synchronous protocols, as they only require
$2f + 1$ replicas to tolerate $f$ malicious players~\cite{fitzi,
  majority-proof-1,majority-proof-2}.  In contrast, because partially
synchronous protocols relax the assumptions about the network, they
need at least $3f + 1$ replicas~\cite{dwork88}.  In a blockchain
system with 100 participants, for example, a synchronous protocol can
handle up to 49 malicious replicas, while a partially synchronous
protocol can only tolerate 33.  Moreover, smaller quorum sizes not
only enhance robustness but also improve performance, as receiving
responses from fewer replicas is typically faster, especially in
wide-area networks (WANs).

Since the safety of synchronous protocols relies on the permanent
upper time bound that must always hold, this bound must be chosen
conservatively, unlike in partially synchronous protocols. A typical
approach in synchronous BFT systems is to measure round-trip time
(RTT) latency beween participants using ping, and choose a value of
$\Delta$ that is either the 99.99th percentile or a multiple of the
maximum observed value~\cite{xft,sync-hotstuff}. This approach is
problematic for at least two reasons. First, the conservative choice
of $\Delta$ results in poor performance. Second, the method used to
determine the value may not reflect precise RTTs for messages because
messages in blockchain systems are significantly larger than pings.\footnote{Messages
carrying blocks can even reach a few megabytes in size:
\href{https://www.blockchain.com/explorer/charts/avg-block-size}{https://www.blockchain.com/explorer/charts/avg-block-size}.}

\PAR{Key Observation.}
To get a more accurate understanding of how message size influences
latency and, in turn, the performance and resilience of synchronous
protocols, we conducted an empirical study over the course of three
months to measure message delays across various configurations,
deployments, and message sizes.  These measurements were taken in both
intra- and inter-region communication, using different machine
configurations and cloud providers, including Amazon Web Services
(AWS) and DigitalOcean. The results consistently revealed a key
phenomenon: in all setups, small messages exhibited significantly lower latency and
less variance than large messages, sometimes by up to
two orders of magnitude.  As a result, synchronous protocols must
account for message size when choosing $\Delta$.  And protocols that
exchange large messages and depend on their timely delivery for safety
must adopt a larger $\Delta$ to account for their greater and more
variable latencies. This, in turn, directly increases the latency of
such protocols, ultimately impacting their performance.

\PAR{New Approach.}
This paper introduces the hybrid synchronous system model,
differentiating small and large messages.  The new model assumes that
for small messages, there is a permanent time bound that always holds,
similar to the synchronous system model; for large messages, there is
an eventual time bound, which holds after the global stabilization
time (GST), as in the partially synchronous model.

Based on this model, we designed a novel BFT consensus protocol,
\alterbft. \alterbft separates messages used for coordination from
messages used for value propagation.  Coordination messages are
small---in our experience, less than 4KB---and value propagation
messages can be of arbitrary size. By making this distinction, \alterbft's
safety (i.e., no two honest replicas decide on a different value) 
only depends on the timely delivery of small
messages; therefore, AlterBFT can improve performance
without limiting value size.  While distinguishing between these
message types might seem like an obvious design choice, it raises
significant complications to ensure that (i) participants can vote for a
value before seeing the value and (ii) participants
who have voted for a value eventually receive it. We explain in the paper
how \alterbft guarantees (i) and (ii), and
present a detailed proof of correctness for AlterBFT in Appendix~\ref{sec:appendix-proof}.

\PAR{Results.} 
We have implemented \alterbft and compared it to state-of-the-art
synchronous and partially synchronous protocols.  Experimental
evaluation in a geographically distributed environment shows that
\alterbft improves the latency of synchronous protocols from
1.5$\times$ to 14.9$\times$, achieving latency comparable to partially
synchronous protocols. Furthermore, \alterbft achieves similar
throughput as synchronous protocol, consistently higher than partially
synchronous protocols, from 1.3$\times$ to 7.2$\times$.  Lastly,
\alterbft tolerates the same number of failures $f<n/2$ as synchronous
protocols, an improvement over partially synchronous protocols, where
$f<n/3$.

\PAR{Roadmap.}
The remainder of the
paper is structured as follows.  Section~\ref{sec:key-observation}
provides more details on the motivation and opportunity.  
Section \ref{sec:sysmodel} details the system model and main assumptions.
Section \ref{sec:problem-def} defines the problem solved by \alterbft rigorously.
Section \ref{sec:alterbft} presents \alterbft.  Section \ref{sec:evaluation}
experimentally evaluates \alterbft's performance in a geographically
distributed environment and compares \alterbft to state-of-the-art
synchronous and partially synchronous protocols.  Section
\ref{sec:related-work} overviews related work and Section
\ref{sec:conclusion} concludes. All appendices are included as
supplementary material.
\section{Motivating Observation}
\label{sec:key-observation}

Synchronous BFT protocols are often viewed with skepticism in the
distributed systems community. The primary reason for this stems from
the challenge of determining a message bound that is both sufficiently
large to guarantee protocol correctness and tight enough to provide
good performance.  For example, Sync HotStuff~\cite{sync-hotstuff}
uses a bound of 50$\times$ the largest observed latency in local-area
setups, while XFT~\cite{xft} employs the 99.99th percentile of
latencies for wide-area setups. 

To investigate opportunities to improve performance, we
conducted a three-month study to understand assumptions on message
bounds made in existing synchronous systems. This assumption is
important because synchronous protocols run at the pace of the
$\Delta$ value, as opposed to partially synchronous systems, which run
at the pace of a quorum of replicas. Our goal was to understand
how message size relates to latency.

We assume a geographically distributed system that relies on message
passing for communication. Specifically, we focus on public cloud
environments. This section presents the most relevant
results, and, to be concise, we do not show measurements for all
message sizes, but only for messages of 2KB and 128KB. Appendix
\ref{sec:appendix-msg-delays} contains the full study.

\begin{table*}[ht]
\centering
\begin{tabular}{lcccccc}
\toprule
            & \multicolumn{3}{c}{\textbf{99.99\%}} & \multicolumn{3}{c}{\textbf{MAX}} \\
\cmidrule(lr){2-4} \cmidrule(lr){5-7}
\textbf{}   & \textbf{2KB} & \textbf{128KB} & \textbf{Diff} & \textbf{2KB} & \textbf{128KB} & \textbf{Diff} \\
\midrule
\textbf{Single Region}	& 5.13   	& 120.48	& 23.49$\times$	& 10.87	& 180.10	& 16.57$\times$ \\
\textbf{Large Machines}	& 1.01   	& 3.99	& 3.94$\times$		& 6.64	& 107.34	& 16.15$\times$ \\
\textbf{Cross-Region}	& 197.50 	& 1399.00	& 7.08$\times$		& 2008.50	& 7295.50	& 3.63$\times$  \\
\textbf{Different Provider}	& 383.00 	& 4953.50	& 12.93$\times$	& 591.50	& 5879.00	& 9.94$\times$  \\
\textbf{Cross-Vendor}	& 1114.00	& 5976.00	& 5.36$\times$		& 4625.50	& 6558.00	& 1.42$\times$  \\ \hline
\textbf{Synthetic model 1}	& 5.13	& 8.13	& 1.59$\times$		& \multicolumn{3}{l}{(based on Single Region)} \\
\textbf{Synthetic model 2}	& 1.01	& 2.11	& 2.01$\times$		& \multicolumn{3}{l}{(based on Large Machines)} \\
\bottomrule
\end{tabular}
\caption{Latency comparison in milliseconds across different setups (99.99\% and MAX).}
\label{tbl:delta-msg-size}
\end{table*}

\PAR{Single-Region Experiments.} 
Our initial
experiments were conducted using two replicas located in the same AWS
region, N. Virginia. These replicas were hosted on free-tier t3.micro
instances. The results, summarized in
Table~\ref{tbl:delta-msg-size}, revealed a surprising pattern: there
was a distinct bifurcation in latency based on message size.
Concretely, large messages of 128KB displayed more than 23$\times$ the latency of
smaller messages of 2KB. Moreover, on Figure \ref{fig:cdf-delay}, we can clearly see that 
the latencies of large messages are not only larger but also vary much more compared to small messages.


  \begin{figure}[ht]
    \center
      \includegraphics[width=\columnwidth]{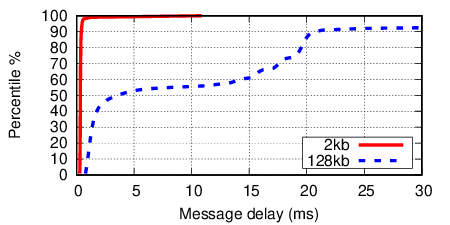}
      \caption{Communication delays between two replicas located in the the same AWS region (N. Virginia).}
      \label{fig:cdf-delay}
    \end{figure}

\PAR{Large-Machine Experiments.} 
Our first intuition was to suspect that the performance was related to
our choice of instance, as the free-tier are listed as having ``low to
moderate'' network capabilities. Perhaps the free-tier instances were
rate limited in some way or configured with a different network
setting? So, we repeated the same experiment using larger instances,
specifically m5.8xlarge machines, which are commonly used in recent
blockchain protocols~\cite{narwhal}. These machines feature 128 GB of
memory, 32 vCPUs (16 physical cores), and a network bandwidth of 10
Gbps. The results indicated that larger machines significantly reduce
message delays. However, despite the overall reduction, there was
still a clear bimodal pattern, with larger messages having almost
4$\times$ the latency of smaller messages.

\PAR{Cross-Region Experiments.} 
We wondered if the behavior was because the experiments ran in a
single region. To test this hypothesis, we again repeated the
experiment, but modified the setup so that five replicas were placed
in the most distant AWS regions—N. Virginia, S. Paulo, Stockholm,
Singapore, and Sydney. Once again, the results supported our initial
observation. Large messages had 7$\times$ the latency of smaller
messages.

\PAR{Different Provider Experiments.} 
One explanation for the observed phenomenon is that perhaps Amazon AWS
sets higher priorities for small messages, or uses some type of unique
configuration.  To check this hypothesis, we repeated the experiment
on a different cloud provider, DigitalOcean. We placed the replicas in
five distinct regions: New York, Toronto, Frankfurt, Singapore, and
Sydney.  Once again, we saw the same behavior: there was a clear
distinction between the latency of small and large messages, with
large messages having 13$\times$ the latency of smaller messages.

\PAR{Cross-Vendor Experiments.} 
Finally, we asked if the same pattern held when sending traffic
between different cloud providers. Maybe there was something
particular to intra-provider communication, as opposed to
inter-provider communication? To eliminate this potential bias, we
performed experiments across providers, placing replicas in the same
five AWS and five DigitalOcean regions and measuring the latencies
between them. Once again, the observation held.

\PAR{Synthetic Model.}
While we cannot completely explain why small messages have lower
latency and less jitter than large messages, one fundamental factor is
the way networks handle message fragmentation.  When a large message
(e.g., 128KB) is transmitted, it is divided into multiple packets at
the network level. Each packet typically corresponds to a size defined
by the Maximum Transmission Unit (MTU). These fragmented packets are
then sent individually through the network, and on the receiving side,
all packets must arrive before the message can be reconstructed. If
any of these packets are delayed, the original message will be
delayed.

To test this hypothesis, we developed a synthetic model that simulates
the delay of 128KB messages when fragmented into 2KB messages.  We
then used the delay data collected from our small message experiments
(2KB messages) and simulated the delay for 128KB messages.  To
calculate the delay of a large message, we randomly selected 64
packets from the small message dataset.  The largest delay among the
sample packets represents the delay of the 128KB message.

The synthetic model is a theoretical best case as it ignores aspects
of large message transmission (e.g., network congestion, packet loss
and retransmission, packet reorder).  However, it provides insight
into the difference between the delay of small and large messages.

\PAR{Salient Observations.} 
From these results, we make the following observations:
\begin{itemize}[leftmargin=*]
\item By examining a range of message sizes, we empirically saw that messages up to 4KB
  demonstrated significantly lower latency and more stability than larger messages (see Appendix \ref{sec:appendix-msg-delays}).
\item This suggest a model 
that distinguishes between small ($\leq$4KB) and large ($\geq$4KB) message sizes.
\item For small messages, we can
set tight bounds on the message delivery time, $\Delta$, like synchronous systems.  
\item For large messages, we can treat the system like a partially synchronous system and
incorporate a global stabilization time (GST).   
\end{itemize}  

The challenge, then, is to design a protocol that can benefit from this model:
a protocol whose safety relies on small messages only, 
and thus requires only a simple majority of honest replicas, 
but whose performance can benefit from exchanging large messages.

\section{System Model}
\label{sec:sysmodel}

In this section, we introduce a new system model, called the
\emph{hybrid synchronous} system model.  We start by outlining the
general assumptions and then present the novel timing assumptions,
which are designed to capture the behavior observed in
Section~\ref{sec:key-observation}.

\PAR{General Assumptions.}
We focus on public cloud environments in which geographically
distributed replicas communicate by exchanging messages.  Replicas can
be \emph{honest} or \emph{faulty}.  An honest replica follows its
specification; a faulty, or Byzantine, replica presents arbitrary
behavior.  The system includes $n$ replicas, among which up to $f$ may
be faulty, with the condition that $n > 2f$.  Replicas do not have
access to a shared memory or a global clock, but each replica has its
own local (hardware) clock, and while these clocks are not
synchronized, they all run at the same speed.  Replicas communicate
using point-to-point reliable links: if both sender and receiver are
honest, then every message sent is eventually received.

We assume the presence of a public-key infrastructure (PKI), secure
digital signatures, and collision-resistant hash functions.  A message
$m$ sent by process $p$ is signed with $p$'s private key and denoted
as $\li{m}_p$.  Additionally, $id(v)$ represents the invocation of a
random oracle that returns the unique hash of value $v$.


\PAR{Timing Assumptions.}\label{sec:timeassump}
Driven by the experimental data presented in Section
\ref{sec:key-observation}, we adopt distinct assumptions for small and
large messages. We define ``small'' based on empirical observation as
$\leq 4KB$ (see Appendix \ref{sec:appendix-msg-delays}). We assume that small messages
adhere to a predefined time bound, as in the synchronous system
model. In contrast, for large messages, we assume the existence
of an eventual time bound, referred to as the Global
Stabilization Time (GST)~\cite{dwork88}. This assumption
is made by all partially synchrounous consensus protocols~\cite{pbft, bucham18,hotstuff,hotstuff-2,icc,bullshark-partially-sync}.

Thus, our new \emph{hybrid synchronous} system model has two communication properties, one for each message type:

\begin{itemize}[leftmargin=*]
  \item \emph{Type \typeS messages:} If an honest replica $p$ sends a message $m$ of type \typeS to an honest replica $q$ at time $t$, then $q$ will receive $m$ at time $t + \Delta_S$ or before.
  \label{prop-small-msg}
  \item \emph{Type \typeL messages:} If an honest replica $p$ sends a message $m$ of type \typeL to an honest replica $q$ at time $t$, then $q$ will receive $m$ at time $max\{t, GST \} + \Delta_L$ or before. 
  \label{prop-big-msg}
\end{itemize}

Lastly, following the approach of
\cite{sync-hotstuff,rot-sync-hotstuff}, we do not assume lock-step
execution (e.g., \cite{byz-gen,dolev-strong}), where all honest
replicas begin each round (or epoch) simultaneously.  Instead, we
assume that all honest replicas start the execution within a
$\Delta_S$ time.\footnote{This can be implemented in a real system by
having each replica broadcast a small $Start$ message upon the
beginning of the execution. A replica starts either upon receiving a
$Start$ message from another replica or at a specific point in time.}

\PAR{Threat Model.}
We assume that malicious participants can alter their own behaviors
(e.g., can delay sending values, send the wrong value), but they
cannot alter the behavior and communication of honest nodes.
Moreover, they cannot subvert cryptographic primitives. These
assumptions are consitent with prior work~\cite{xft,sync-hotstuff,rot-sync-hotstuff,dfinity}.

\begin{remark}
  Our new system model assumes a majority of replicas in the system
  are honest, while the remaining replicas may behave arbitrarily:
  they can be slow, crash, or act maliciously. Additionally, the model
  assumes that type \typeS messages exchanged between honest replicas
  always adhere to the specified time bound $\Delta_S$, whereas type
  \typeL messages are required to respect the time bound only after
  the Global Stabilization Time (GST).
\end{remark}

\section{Problem Definition}
\label{sec:problem-def}

SMR and consensus are used to totally order client transactions so that replicas process them in 
the same order and remain consistent. Specifically, in blockchain systems, transactions are grouped 
into blocks, and replicas use consensus to agree on a chain of blocks, where the position of a block 
in the chain is referred to as its \emph{height}.

A block $B_k$ at height $k$ has the following format: $B_k \coloneqq (b_k, H(B_{k-1}))$, where $b_k$ 
represents a proposed value (i.e., a set of transactions), and $H(B_{k-1})$ is the hash digest of the 
preceding block. The first block, $B_1 = (b_1, \bot)$, has no predecessor. Every subsequent block $B_k$ 
must specify its predecessor block $B_{k-1}$ by including a hash of it. If block $B_k$ is an ancestor 
of block $B_l$ (i.e., $l \ge k$), we say that $B_l$ \emph{extends} $B_k$.

A valid (blockchain) consensus protocol must satisfy the following properties: 
\begin{itemize}[leftmargin=*]
    \item \emph{Safety}: No two honest replicas commit different blocks at the same height. 
    \item \emph{Liveness}: All honest replicas continue to commit new blocks. 
    \item \emph{External validity}: Every committed block satisfies a predefined 
    \emph{valid()} predicate. 
\end{itemize}

Safety ensures that all honest replicas agree on the same chain of blocks, while liveness guarantees 
that new blocks are continuously added to the blockchain by honest replicas, preventing the system 
from halting. External validity is, as the name suggests, an application-specific property, 
defined to ensure certain conditions are met (e.g., the absence of double-spending transactions).

\begin{remark}
    \alterbft is the first (blockchain) consensus protocol designed for the hybrid 
    synchronous system model. Its safety mechanism depends solely on type \typeS messages 
    exchanged between honest replicas, enabling it to tolerate the same number of faulty replicas 
    as synchronous protocols. Large type \typeL messages are used to carry blocks, facilitating high 
    performance by enabling agreement on substantial amounts of data. However, their timely delivery 
    is necessary only for liveness, similar to partially synchronous protocols.
\end{remark}

In this work, we focus on leader-based consensus protocols 
\cite{lamport98, pbft, bitcoin, ethereum, avalanche}, where a designated replica, the leader, proposes the 
order of blocks. Specifically, we examine rotating-leader consensus protocols 
\cite{rot-sync-hotstuff, hotstuff-2, hotstuff, bullshark-partially-sync, bucham18, fireledger, streamlet, internet-computer}, 
where leadership changes regularly, not only when the leader is faulty (e.g., \cite{lamport98, xft, pbft, sync-hotstuff}).

Rotating leadership enhances censorship resistance and fairness in blockchain systems by giving every replica an opportunity to propose blocks and earn rewards. Frequent leader changes also mitigate the impact of Byzantine leaders, limiting the ability of Byzantine leaders to censor transactions or exploit the control of block creation (MEV \cite{mev}).


\newcommand\Disseminate{\textbf{Disseminate}}

\newcommand\Proposal{\mathsf{PROPOSAL}}
\newcommand\ProposalPart{\mathsf{PROPOSAL\mbox{-}PART}}
\newcommand\PrePrepare{\mathsf{INIT}} \newcommand\Prevote{\mathsf{PREVOTE}}
\newcommand\Precommit{\mathsf{PRECOMMIT}}
\newcommand\Decision{\mathsf{DECISION}}

\newcommand\ViewChange{\mathsf{VC}}
\newcommand\ViewChangeAck{\mathsf{VC\mbox{-}ACK}}
\newcommand\NewPrePrepare{\mathsf{VC\mbox{-}INIT}}
\newcommand\leader{\mathsf{leader}}

\newcommand\newHeight{newHeight} \newcommand\newRound{newRound}
\newcommand\nil{nil} \newcommand\id{id}
\newcommand\prevote{prevote} \newcommand\prevoteWait{prevoteWait}
\newcommand\precommit{precommit} \newcommand\precommitWait{precommitWait}
\newcommand\commit{commit}
\newcommand{\propose}{\textsc{propose}}
\newcommand{\vote}{\textsc{vote}}
\newcommand{\silence}{\textsc{silence}}
\newcommand{\double}{\textsc{equiv}}
\newcommand{\quit}{\textsc{quit-epoch}}
\newcommand{\silCert}{\textsc{silence-cert}}
\newcommand{\valueCert}{\textsc{block-cert}}
\newcommand{\equivCert}{\textsc{equiv-cert}}
\newcommand{\ready}{\textsc{active}}
\newcommand{\committed}{\textsc{committed}}
\newcommand{\notcommitted}{\textsc{not-committed}}

\newcommand\timeoutCertificate{timeoutCertificate}
\newcommand\timeoutEpochChange{timeoutEpochChange}
\newcommand\timeoutCommit{timeoutCommit}
\newcommand\timeoutPrecommit{timeoutPrecommit}
\newcommand\proofOfLocking{proof\mbox{-}of\mbox{-}locking}

\section{\alterbft}
\label{sec:alterbft}

This section presents an overview of \alterbft, describes how
\alterbft\ operates when the leader is honest and when it is
Byzantine, argues about \alterbft's correctness, and discusses what happens when small messages violate synchrony bounds.
Appendix \ref{sec:pseudo-code} contains the detailed pseudocode and explanatory comments.

\subsection{Protocol Overview} 
\alterbft builds on the rotating-leader variant of Sync HotStuff \cite{rot-sync-hotstuff}, a state-of-the-art synchronous consensus protocol \cite{sync-hotstuff}.  
\alterbft operates in epochs, numbered $0, 1, 2, \ldots$. In each epoch, a single replica acts as the leader, selected either through a deterministic function or a random oracle.  
The leader's responsibility is to propose a new block to be appended to the blockchain.


\paragraph{Certificates.}  
Certificates are the key abstraction in the protocol. They consist of a set of signed messages that enable a 
replica to prove to itself and other replicas that the leader of an epoch performed specific actions. 
There are three types of certificates:  
\begin{itemize}[leftmargin=*]
    \item \textbf{Block certificate:} Verifies that the leader of an epoch proposed a valid block.  
    \item \textbf{Equivocation certificate:} Proves that the leader is Byzantine and proposed multiple 
    conflicting blocks in the same epoch.  
    \item \textbf{Silence certificate:} Demonstrates that the leader failed to send a proposal to at least 
    one honest replica, either due to being slow or remaining silent.  
\end{itemize}

Certificates are used in Sync HotStuff to ensure that all replicas 
enter each epoch within $\Delta$ time and to detect whether it is safe to commit the proposed block in an 
epoch. These goals are achieved through a simple exchange of certificates. Specifically:  
\begin{itemize}[leftmargin=*]
    \item When a replica wants to move to the next epoch after receiving a certain certificate, it can 
    simply broadcast the certificate, ensuring that all honest replicas will receive it within $\Delta$ time. 
    Since the certificate is self-contained, any replica that receives it can verify its validity and also 
    proceed to the next epoch.  
    \item When a replica wants to commit the proposed block in epoch $e$, it sends a block certificate to 
    all other replicas and sets a $2\Delta$ timeout. Within $\Delta$ time, all honest replicas will receive 
    the certificate. If any honest replica possesses a certificate indicating that the leader is faulty 
    (either an equivocation or a silence certificate), it can forward this certificate to others. 
    Within $\Delta$ time, the faulty leader's behavior will be known, and replicas will avoid committing 
    the block.  
\end{itemize}

We aimed to use the same mechanism in \alterbft; however, since the safety of \alterbft must rely exclusively 
on small messages, certificates must be categorized as type \typeS messages. 
Unfortunately, the certificates in Sync HotStuff depend on large messages and therefore require modifications. 
Specifically:  
\begin{itemize}[leftmargin=*]
    \item The equivocation certificate consists of two signed proposals from the leader, each containing full 
    conflicting blocks.  
    \item The block certificate requires a replica to receive the original proposal, including the block.  
\end{itemize}

Since blocks can be of arbitrary size, messages carrying them must be categorized as type \typeL. Consequently, 
before GST, such messages might be delayed, preventing honest replicas from entering the epoch within $\Delta$ 
time or potentially compromising safety.

\paragraph{Detecting equivocation without the proposal.}  
To detect equivocation without relying on the full block proposal, we made a simple but effective observation: 
detecting an equivocating leader requires only proof that it sent signed votes for two different proposals. 
Consequently, in \alterbft, the leader must send a signed vote alongside its proposal, which contains only 
the hash of the block. A replica considers a proposal valid only if it receives both the proposal and 
the accompanying vote. With this modification, an equivocation certificate now contains only  
two signed votes from the leader for two different hashes, making it small and suitable for categorization as a type \typeS message.

\paragraph{Ensuring full block availability.}  
The second issue concerned the requirement that a replica must receive a full block proposal, 
along with $f+1$ votes, to move to the next epoch and consider the block certified. 
We observed that this is unnecessary. Instead, a replica can progress after receiving $f+1$ votes alone, 
as it knows that at least one of these votes comes from an honest replica that has received the full proposal. 
This honest replica will then forward the proposal, ensuring that all honest replicas eventually receive it.  

\begin{remark}
All certificates in \alterbft are small and categorized as type \typeS messages. 
This ensures that replicas in \alterbft can enter each epoch within $\Delta_S$ time. Additionally, once a 
block is certified, replicas can safely commit it after $2\Delta_S$ time.
\end{remark}

\subsection{Epoch with an Honest Leader}

\alterbft shares the communication pattern of Sync HotStuff \cite{rot-sync-hotstuff} when the leader is honest. 
This pattern is proven to achieve optimal latency in a rotating-leader setup and supports responsive leader 
rotations under a sequence of honest leaders---a property known as optimistically responsive leader rotations.
The protocol relies on certificates, and consists of three phases: propose, vote, and commit.

%

\PAR{Certified Blocks.}
Before a block $B_k$ can be added to the blockchain, it must be
certified.  Specifically, this means it must be approved or voted for
by at least one honest replica.  Since our system tolerates up to $f$
Byzantine replicas, a block is considered certified when it receives
$f+1$ signed votes from distinct replicas within an epoch. We denote
the certificate for block $B_k$ from epoch $e$ as $C_e(B_k)$.

In \alterbft, each honest replica keeps track of the most recent
certified block it knows of in a variable called $lockedBC$.  A
certificate $C_e(B_k)$ is considered more recent than $C_{e'}(B'_k)$
if $e>e'$. Honest replicas use this value to guard the safety of
\alterbft.

\PAR{Proposal Phase.} 
In an epoch, the leader, an honest replica $l$, broadcasts a proposal
containing a block that extends its most recent certified block,
$lockedBC_l$. Before proposing a new block, the honest leader must
ensure it has the most recent certified block. If the leader already
possesses a block certified in the previous epoch, it knows this is
the most recent certified block and can propose a new block
immediately. Otherwise, it must first learn the certificates from
other honest replicas.

To achieve this, the leader uses a timeout called
$timeoutEpochChange$. Specifically, the leader starts this timeout and
sets it to expire in $2\Delta_S$. Since the leader knows that within
$\Delta_S$ all honest replicas will enter the same epoch and that
every honest replica broadcasts its $lockedBC$ before entering the new
epoch, the leader will receive the required information within
$2\Delta_S$, before $timeoutEpochChange$ expires.  Note that the
leader waits for this timeout only if the previous epoch was before
GST or if the leader in the previous epoch was Byzantine.

The proposal message includes the new block , $lockedBC$, and the
epoch number. It is classified as a type \typeL message because it
carries the block, which may contain an arbitrary number of
transactions.  In addition to the proposal, the leader also broadcasts
its signed vote for the proposal.  This vote contains the current
epoch number and the hash of the proposed value, $id(v)$. Since the
vote is of constant small size, it is considered a type \typeS
message. \alterbft uses the leader's vote message to detect
equivocation rather than relying on the proposal message itself.

\paragraph{Vote Phase.} 
Upon receiving a proposal and a signed vote from the leader, a replica 
verifies whether the block is valid (Section \ref{sec:problem-def}). 
Furthermore,  the replica will accept the new block only if it truly 
extends the block from $lockedBC_l$ and if $lockedBC_l$ is at least 
as recent as the replica's $lockedBC$. The replica votes for a proposal 
by sending a signed vote message to all replicas, voting only once per 
epoch for the first proposal it receives. Additionally, the replica 
forwards the proposal and the leader's vote to all replicas. 

Forwarding the leader's vote is necessary to detect leader equivocation.
Forwarding the proposal ensures the eventual delivery of all certified 
blocks. If a block is certified, at least one replica among those who 
voted for the value is honest and will forward the block.\footnote{Forwarding the proposal can be disabled, in which case a pull 
mechanism would need to be implemented to download the missing blocks \cite{mysticeti}.}

\paragraph{Commit Phase.}
Upon receiving a block certificate for the block proposed in the current epoch, 
an honest replica updates its $lockedBC$. It then propagates the block certificate 
to all replicas and starts a $2\Delta_S$ timer, $timeoutCommit$. These actions 
are performed even if the replica has not yet received the full proposal message, 
as it knows the proposal will eventually arrive. This is ensured because at 
least one of the replicas that received the full block and voted for it is
honest and will forward it.

When $timeoutCommit$ expires, if the replica has not received any other certificate 
(e.g., an equivocation certificate, a silence certificate, or a block certificate for 
a different block), it commits the proposed block. If the full block has not yet arrived, 
the replica stores the hash and commits the block once it is received.  
The replica is safe to commit this block because it knows that all honest replicas 
received the block certificate as the first certificate in this epoch and updated 
their $lockedBC$. Moreover, it ensures that all honest replicas voted for this block
and as a result this is the only possible block certificate for the epoch. 
Consequently, only blocks extending this block can be certified and committed in subsequent 
epochs (see Section \ref{sec:alter-safety}). 

Since blocks are linked, committing block $B_k$ also commits all blocks it extends. 
Specifically, when a replica commits block $B_k$ proposed in the current epoch, 
it directly commits $B_k$ and indirectly commits all its ancestors.

\begin{remark}
   \alterbft achieves an optimal latency of $2\Delta$ in a rotating-leader setup, 
   similar to Sync HotStuff. However, in \alterbft, $\Delta$ accounts only for 
   small messages, denoted as $\Delta_S$, which is significantly smaller, resulting 
   in a lower latency.
\end{remark}

Lastly, a replica starts the next epoch as soon as it receives a block certificate, 
without waiting for $timeoutCommit$ to expire. This is safe because the replica 
knows it is the most recent possible block certificate. In the following epoch, 
if the replica becomes the leader, it will extend this block without waiting 
for $timeoutEpochChange$. All honest replicas will accept the new block, 
as it extends the block certified in the previous epoch, representing the 
most recent block certificate. 

\begin{remark}
   \alterbft achieves optimistically responsive leader rotations. 
   During a sequence of honest leaders, replicas progress through epochs responsively, 
   requiring only a block certificate in each epoch to move forward.
\end{remark}

\paragraph{FastAlterBFT.} 

Although \alterbft achieves optimal latency in a rotating-leader scenario, it is unfortunate that replicas 
must wait for a conservative $2\Delta_S$ before committing a block, even in scenarios where the network 
is synchronous and all replicas in the system are honest---conditions that are the most common in practice.  

To address this limitation, we introduced a fast commit rule to \alterbft \cite{itai-optimistic, sbft, zyzzyva, banyan}.
Specifically, an honest replica commits a block proposed in an epoch if it receives votes for the block from all 
replicas in the system, provided no evidence of misbehavior (e.g., an equivocation or silence certificate) is 
detected. As a result, when there are no failures in the system, \alterbft commits a block in 
just two communication steps, achieving optimal latency \cite{bosco}, 
without waiting for the synchrony bound 
$\Delta_S$. We refer to this fast path as \fastalter.  

It is important to note that adding the fast commit rule does not affect the normal execution of the protocol. 
As described previously, replicas still start the $timeoutCommit$ timer upon receiving a block certificate. 
However, if a replica receives votes from all replicas before the timer expires and no misbehavior is detected, 
it commits the block immediately.

\begin{remark} 
   Without malicious replicas and after GST, \alterbft is fully responsive: it changes leaders and commits blocks at network speed without relying on a conservative $\Delta$.
\end{remark}


\subsection{Epoch with a Faulty Leader}

In epochs with a faulty leader, an honest replica can generate equivocation and silence certificates, in addition to a block certificate.

\begin{itemize}[leftmargin=*]
   \item \textbf{Equivocation certificate ($C_e(\double)$):} Composed of votes signed by the leader for two different blocks within the same epoch.
   \item \textbf{Silence certificate ($C_e(\silence)$):} Composed of $f+1$ silence messages, each one signed by a distinct replica.
\end{itemize}

\alterbft must ensure that Byzantine replicas cannot halt the protocol in epochs with 
a Byzantine leader (maintaining liveness). Additionally, it must guarantee that if an 
honest replica commits a block, all honest replicas update their $lockedBC$ 
to this block before transitioning to the next epoch (maintaining safety).

\PAR{On the need of certificates.} 
%
To prevent an honest replica from remaining stuck in an epoch, \alterbft ensures 
that at least one certificate is created in each epoch. To achieve this, an 
honest replica starts a $timeoutCertificate$ timer upon entering the epoch. 
If the timeout expires and no certificate has been received, the replica sends 
a silence message. If no honest replica has received any other certificate, 
all honest replicas will broadcast silence messages, resulting in the 
creation of $f+1$ silence messages and, subsequently, a silence certificate.  

The duration of $timeoutCertificate$ is $\Delta_L + 4\Delta_S$. 
This duration accounts for the time needed to generate a block 
certificate in epochs with an honest leader after GST. It ensures that replicas 
do not prematurely send silence messages, which could disrupt liveness when the 
leader is honest, and the network is operating after GST.

\PAR{Sometimes waiting is necessary for safety.} 
Similar to the epoch with an honest leader, 
an honest replica starts $timeoutCommit$ upon collecting one of the certificate. 
If the certificate is a block certificate, the replica moves to the next epoch 
immediately without waiting for the timeout to expire. However, if it is a blame or 
equivocation certificate, the replica must wait for the timeout to expire 
or receive a block certificate, before transitioning to the next epoch.

This requirement arises due to the \fastalter commit rule. 
Specifically, there could be a scenario where one honest replica 
commits a block using the fast commit rule, while another replica 
receives a blame or equivocation certificate. If the latter 
replica were allowed to move to the next epoch immediately, 
it would fail to update its $lockedBC$ to the committed block certificate, 
potentially compromising the protocol's safety.

\begin{remark}
   \alterbft's epoch change mechanism improves upon Sync HotStuff's by enabling 
   a responsive commit rule (\fastalter) and enhancing handling of silent or slow 
   leaders. Specifically, \alterbft's $timeoutCertificate$ is set to $\Delta_L + 4\Delta_S$, 
   compared to the equivalent timeout in Sync HotStuff, which is $7\Delta$.
\end{remark}

\subsection{\alterbft's Correctness }
\label{sec:alter-proof-intuition}

In this section, we provide the intuition on \alterbft's correctness. 
The full proof can be found in the Appendix \ref{sec:appendix-proof}.

\subsubsection{Safety}
\label{sec:alter-safety}


To ensure safety (see Section~\ref{sec:problem-def}), \alterbft's commit rules must satisfy two key invariants: if an honest replica $r$ 
commits block $B_k$ in epoch $e$, then (1) $C_e(B_k)$ is the only block certificate that 
exists in epoch $e$ (i.e., no honest replica voted for a block $B'_{k'} \neq B_k$ in $e$), and
(2) all honest replicas lock on $B_k$ by setting $lockedBC$ to $C_e(B_k)$ in epoch $e$.

As a result, in subsequent epochs, honest replicas only vote for blocks that extend those certified in epochs 
$e' \geq e$. Since by (1), $B_k$ is the only certified block in epoch $e$, and
by (2), all honest replicas set $lockedBC$ to $C_e(B_k)$,
in epochs after $e$, honest replicas will only vote for blocks extending $B_k$. 
Consequently, only blocks extending $B_k$ will be certified and committed.

\PAR{How does \alterbft's regular commit rule ensure safety?} 
The regular commit rule states that a replica $r$ commits block $B_k$ if $timeoutCommit(e) = 2\Delta_S$ expires 
and no misbehavior is detected. Invariant (1) holds in this case because $r$, upon receiving a block certificate $C_e(B_k)$ 
at time $t$, starts $timeoutCommit(e)$ and broadcasts $C_e(B_k)$. Since a message containing 
$C_e(B_k)$ is a type \typeS message, all honest replicas will receive it within $\Delta_S$ time, 
by $t+\Delta_S$. 

If any honest replica $q$ voted for $B'_{k'} \neq B_k$, it must have done so before $t + \Delta_S$. 
As $q$ also forwards the leader's vote for $B'_{k'}$, $r$ receives it before $t + 2\Delta_S$. 
In this way, replica $r$ receives the leader's votes for both $B_k$ and $B'_{k'}$ before $timeoutCommit(e)$ expires. 
Consequently, $r$ forms an equivocation certificate $C_e(\double)$ and does not commit. 

Similarly, invariant (2) holds because $q$ will not lock on $C_e(B_k)$ only if it has moved to epoch $e+1$ 
after receiving either $C_e(\double)$ or $C_e(\silence)$ before $t + \Delta_S$. 
Since $q$ forwards the received certificate, and messages carrying certificates are also type \typeS, 
$r$ will receive it before $timeoutCommit(e)$ expires and will not commit.


\PAR{How does \alterbft's fast commit rule ensure safety?} 
In the fast path, a replica commits block $B_k$ in epoch $e$ if it receives votes for $B_k$ from 
all replicas before detecting any misbehavior.  

Invariant (1) trivially holds because a replica knows that all honest replicas voted for 
$B_k$, as it received votes from all honest replicas, and each honest replica votes only once. 

Ensuring invariant (2) prevents an honest replica from always progressing to the next epoch 
immediately after receiving any certificate. Specifically, a replica $r$ locks on block 
$B_k$ at time $t$ and commits at time $t'$, where $t < t' < t + 2\Delta_S$. 
As a result, if an honest replica $q$ receives $C_e(\double)$ or $C_e(\silence)$ at time $t''$, 
where $t' - \Delta_S \leq t'' < t + \Delta_S$, 
it might move to epoch $e+1$ without locking on $C_e(B_k)$, and replica $r$ would not be aware of this. 

To handle this scenario, an honest replica $q$ sets $timeoutCommit(e)$ to expire in $2\Delta_S$ 
after receiving $C_e(\double)$ or $C_e(\silence)$. Moreover, replica $q$ will only move to the next epoch if it receives $C_e(B_k)$ 
or if $timeoutCommit(e)$ expires. 
Since $q$'s $timeoutCommit(e)$ expires at $t'' + 2\Delta_S$, and $t'' + 2\Delta_S > t + \Delta_S$, 
$q$ will receive $C_e(B_k)$ and lock on it before progressing to the next epoch. This ensures that invariant (2) is upheld.

\subsubsection{Liveness}
\label{sec:liveness}

\alterbft guarantees progress after $GST$ (Section~\ref{sec:sysmodel}) during the first epoch led by an honest leader. 
Specifically, progress is ensured in epoch $e > GST$ under an honest leader if: 
(1) the leader proposes a block that all honest replicas accept and vote for, and 
(2) no honest replica broadcasts a \silence\ message in epoch $e$. 
Condition (1) ensures that a block certificate is formed and $timeoutCommit(e)$ is started, while 
condition (2) ensures that no silence certificate can be created. Furthermore, since an honest leader proposes only a single block, 
no equivocation certificate $C_e(\double)$ is possible. 
As a result, when $timeoutCommit(e)$ expires, all honest replicas will commit the proposed block.

To ensure condition (1), the honest leader must learn the most recent certified block before proposing. 
Upon entering epoch $e$ at time $t$, if the honest leader $l$ does not possess a block certificate 
from the previous epoch, $e-1$, it starts $timeoutEpochChange(e)$. 
Since all honest replicas enter epoch $e$ by time $t + \Delta_S$, they broadcast their $lockedBC$ 
no later than $t + \Delta_S$. As a result, the leader $l$ will receive these certificates 
by time $t + 2\Delta_S$. Therefore, the honest leader must set $timeoutEpochChange(e)$ to 
$2\Delta_S$ to ensure it learns the most recent certified block.

To guarantee condition (2), honest replicas must receive the block certificate before $timeoutCertificate(e)$ expires. 
If an honest replica $r$ starts epoch $e$ at time $t$, the honest leader $l$ will enter epoch $e$ 
no later than $t + \Delta_S$. The leader may wait for $timeoutEpochChange(e) = 2\Delta_S$ before proposing a block, 
and thus will propose a block by time $t + \Delta_S + 2\Delta_S$. 
Since the proposal is a message of type \typeL, it may take up to $\Delta_L$ to reach all honest replicas. 
As a result, all honest replicas will vote for the block by time $t + 3\Delta_S + \Delta_L$. Finally, 
as votes are type \typeS messages, an additional $\Delta_S$ is required to deliver them. 
In summary, to guarantee condition (2), honest replicas must set $timeoutCertificate(e)$ to $4\Delta_S + \Delta_L$.





\subsection{Violations of Synchrony}
\label{sec:violations}

\alterbft relies on the timely delivery of small messages
within $\Delta_S$ bound. A natural question is: what happens if this bound is violated? 
Recent work has shown that modern synchronous consensus protocols can tolerate synchrony violations 
without compromising correctness, thanks to their communication diversity and redundancy \cite{opodis}. Specifically, an honest 
replica receives the same information through multiple independent communication paths. As a result, safety violations are 
observed only when many messages violate bound $\Delta$ and Byzantine replicas collude.
As a modern synchronous protocol, these findings also apply to \alterbft. 
Moreover, \alterbft is particularly robust because it requires only type \typeS messages 
to be delivered on time, unlike traditional synchronous protocols that rely on the timely delivery of all messages. 
As discussed in Section \ref{sec:key-observation}, small messages not only have lower latency but also exhibit greater stability, 
further minimizing the risk of correctness violations.

\newcolumntype{P}[1]{>{\centering\arraybackslash}p{#1}}

\begin{table*}[h]
\small
\centering
\ra{1.3}
\begin{tabular} { @{}cccccc@{} }
  \toprule
      &	\textbf{System model} & \textbf{Resilience} & \textbf{Good Case Latency} & \textbf{Pipelining} & \textbf{Optimistic Responsiveness}\\
      \midrule
  \textbf{Sync HotStuff \cite{rot-sync-hotstuff}}	& synchronous  & $f < n/2$ &  $\delta_L$+$\delta_S$+2$\Delta$ & yes & yes \\ 
  \textbf{Tendermint \cite{bucham18}}			& partially synchronous & $f < n/3$  & $\delta_L$+2$\delta_S$ & no & yes \\ 
  \textbf{HotStuff-2 \cite{hotstuff-2}}			& partially synchronous & $f < n/3$  &  3$\delta_L$+2$\delta_S$ & yes & yes \\ 
  \textbf{AlterBFT (this paper)}			& hybrid & $f < n/2$  &  $\delta_L$+$\delta_S$ + 2$\Delta_S$ & yes & yes\\ 
  \textbf{FastAlterBFT (this paper)}			& hybrid & $f < n/2$  &  $\delta_L$ + $\delta_S$ & yes & yes\\
  \bottomrule
  \end{tabular}
  \caption{Protocols in our evaluation and their main characteristics. $\delta_L$ is the actual delay of large messages (i.e., blocks); $\delta_S$ is the actual delay of small messages (i.e., votes and certificates); $\Delta$ is the conservative message delay that accounts for large and small messages; and $\Delta_S$ is the conservative delay of small messages. }
\label{tbl:competition}
\end{table*}

\section{Evaluation}
\label{sec:evaluation}

We compare \alterbft to state-of-the-art leader-rotating Byzantine
consensus protocols in the synchronous and partially synchronous
system models (see Table~\ref{tbl:competition}).  We consider
protocols that allow optimistic responsiveness, meaning that in good
cases, where we have a sequence of honest leaders and synchronous
bounds hold (i.e., after GST), protocols change leaders responsively,
waiting for real network delays only.  
In the synchronous model, we consider a version of Sync HotStuff that
supports responsive leader rotation, pipelining (i.e., replicas start working on the next
block after receiving the certificate for the previous block), and has optimal
latency \cite{rot-sync-hotstuff}.  In the partially synchronous model,
we choose Tendermint \cite{bucham18} and HotStuff-2 \cite{hotstuff-2}
with pipelining, the most recent protocol of the HotStuff
family\cite{hotstuff}.  
Table \ref{tbl:competition}, compares the good case latency of the considered protocols in
failure-free executions.  

\subsection{Experimental Environment and Setup}
\label{sec:eval-setup}

We conducted our experiments on Amazon EC2 with replicas evenly
 distributed across 5 AWS regions: North Virginia (us-east-1), S\~{a}o
 Paulo (sa-east-1), Stockholm (eu-north-1), Singapore
 (ap-southeast-1), and Sydney (ap-southeast-2). A cross-region setup
 within the same provider is a common configuration for performance
 evaluation of BFT consensus protocols designed for
 blockchains~\cite{xft,narwhal,bullshark-async,bullshark-partially-sync,nsdi-DispersedLedger}.
 Replicas were hosted on \emph{t3.medium} instances, with 2 virtual
 CPUs, 4GB of RAM, and running Amazon Linux 2.

To ensure a fair comparison, we implemented all protocols in Go. The
implementations use SHA256 for hashing and Ed25519 64-byte digital
signatures.  We rely on libp2p~\cite{libp2p} for
communication between pairs of replicas.

Each replica includes a built-in client that pre-generates
transactions and stores them in a local pool. When a replica is the
leader of an epoch, it selects transactions from the pool and forms a
block, where the block size determines the number of transactions
included. This design abstracts the mempool (i.e., the component
responsible for propagating client transactions across the system)
from discussion, as different systems may implement it
differently. Therefore, the latencies reported in this paper represent
consensus latencies (i.e., the time required by the leader of an epoch
to commit a block). Throughput is calculated by all replicas as the
rate of committed blocks per time unit. Each point in the graphs is an
average of 3 runs, with each experiment running for 1 minute.

\subsection{On Message Size}
\label{sec:eval-msg-size}

The hybrid synchronous model differentiates between two types of
messages: \typeS and \typeL.  Based on our experimental evaluation
(see Appendix \ref{sec:appendix-msg-delays}), we classified messages
of size 4 KB and lower as type \typeS; larger messages are type \typeL.

Table \ref{tbl:msg-sizes} presents the sizes of all messages exchanged
in \alterbft. The $\vote$ and $\silence$ messages are small,
fixed-size messages that belong to type \typeS. In contrast, the
$\propose$ and $\quit$ messages have variable sizes.

\begin{table}[!ht]
    \small
    \centering
\centering
\ra{1.3}
\begin{tabular} { @{}cP{0.5\columnwidth}c@{} }
  \toprule
\textbf{Message}	& \textbf{Message size (payload)} 				& \textbf{Message type} \\
\midrule
\propose	 			& variable size, depends on block size	& \typeL \\
\vote		 			& fixed size, below 120 bytes 			& \typeS \\
\silence		 		& fixed size, below 100 bytes 			& \typeS \\ 
\quit		& fixed size (50 bytes) + quorum-size * 66 bytes & \typeS \\
\bottomrule
\end{tabular}
\caption{Message sizes in the \alterbft prototype.}
    \label{tbl:msg-sizes}
\end{table}

$\quit$ messages, with certificates, must be exchanged in a
timely manner for correctness and are thus type \typeS.  The size of a certificate
depends on a majority quorum of replicas. In our experiments, the
largest certificate in a system with 85 replicas is 2.8 KB.
Consequently, with the current prototype, we can accommodate
deployments with up to 120 replicas. For larger systems, a more
optimized signature techniques such as BLS \cite{bls} would be
necessary.

The size of $\propose$ messages depends on the block and
certificate sizes. In \alterbft, these messages are
classified as type \typeL, which means there are no restrictions on
their size, and consequently, no restrictions on the block size as
well.

\subsection{Performance Evaluation}
\label{sec:eval-perf}

We measure latency and
throughput while varying the system size (i.e., 25 and 85 replicas)
and block size (i.e., from 1 KB up to 1 MB).

\begin{figure*}[ht!]
    \centering
    \begin{subfigure}[b]{\textwidth}
        \centering
    \includegraphics[width=0.48\textwidth]{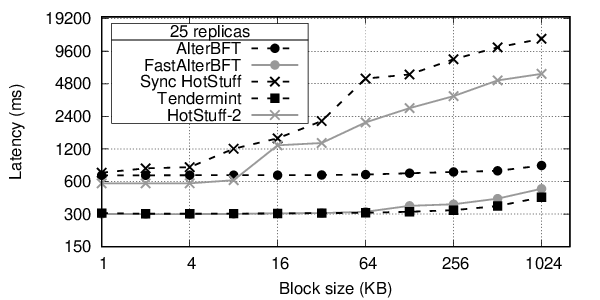}
    \includegraphics[width=0.48\textwidth]{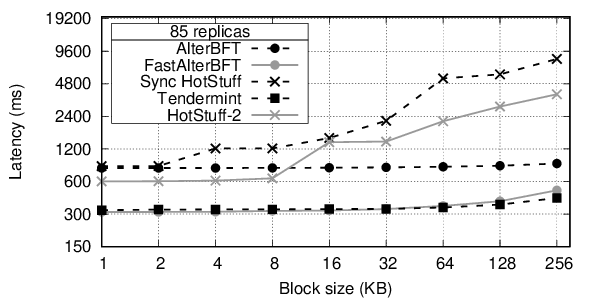}
    \end{subfigure}
    \begin{subfigure}[b]{\textwidth}
        \centering
    \includegraphics[width=0.48\textwidth]{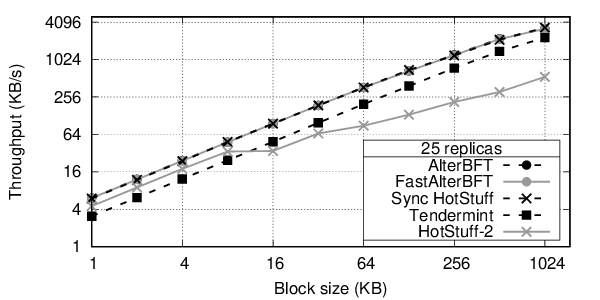}
    \includegraphics[width=0.48\textwidth]{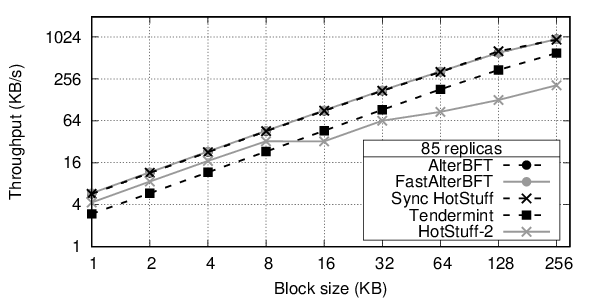}
    \end{subfigure}
    \caption{Average latency (top) and throughput (bottom) comparison for all protocols when varying system size (i.e., 25 and 85 replicas) and block size (all graphs in log scale).
    }
    \label{fig:perf}
\end{figure*}

\PAR{Latency.} 
Figure \ref{fig:perf} shows the average consensus latency computed by
leaders.  From Table \ref{tbl:competition}, Sync HotStuff
and \alterbft latencies directly depend on conservative synchronous
bounds.  This is because both protocols wait for a timeout (i.e.,
$timeoutCommit$, computed as twice the time bound) before committing a
value.  The values used as synchronous bounds for Sync HotStuff
and \alterbft can be found in Appendix \ref{appendix:bounds}.

Synchronous protocols, such as Sync HotStuff, must adopt $\Delta$ that
accounts for the timely delivery of all messages (i.e., large and
small). Consequently, since the size of messages carrying blocks
increases with the block size, the $\Delta$ also increases, leading to
Sync HotStuff's higher latency.  On the other side, \alterbft's
$timeoutCommit$ relies on $\Delta_S$.
As a result, the
difference between \alterbft's and Sync HotStuff's latencies increases
with the block size. Upp to 4 KB blocks, \alterbft
performs slightly better. With 8 KB blocks, \alterbft's
latency is more than 1.5$\times$ lower than Sync HotStuff's, and this
difference raises to 14.9$\times$ with 1 MB blocks.

Latencies of \fastalter and partially synchronous protocols rely only
on the actual message network delays.  Note here that \fastalter
requires one voting phase where it needs to receive the votes from all
replicas, while Tendermint and HotStuff-2 use two voting phases where
they need to receive votes from more than $2/3$ of replicas.
Consequently, \fastalter optimization works only in optimistic
conditions when there are no failures.

Figure~\ref{fig:perf} shows that Tendermint's latency is lower, around
2$\times$, than \alterbft's in all setups.  However, the difference
does not increase with the block size. This is because Tendermint's
and \alterbft's latencies are only affected by one actual network
delay for large messages.  In addition, even though
HotStuff-2's latency depends only on real communication delays,
HotStuff-2 achieves lower latency ($\approx$ 20\%) than \alterbft up
to 8 KB blocks.  The reason is that the actual network delay,
$\delta_L$, increases as the block size increases. Since the latency
of the pipelined version of HotStuff-2 requires three such delays (see
Table \ref{tbl:competition}), the overall latency grows.  As a result,
HotStuff-2 achieves latency from 1.7$\times$ to 7$\times$ higher
than \alterbft when the block size is greater than 8 KB.

Lastly, in failure-free cases, \fastalter's latency is 1.6 to
2.5$\times$ lower than the latency of \alterbft and almost identical
to Tendermint's latency.  This result suggests that, in our wide-area
setup, a voting phase where the leader needs to receive votes from all
replicas requires a similar amount of time as two voting phases with
two-third majority quorums.

\textbf{Throughput.} 
\label{sec:throughput-eval}
Sync HotStuff, \alterbft and \fastalter have similar throughout (see
Figure \ref{fig:perf}), as they have the same communication pattern in
the failure-free case.  Namely, all protocols start working on the
next block as soon as they receive the certificate for the previous
block.  All three protocols outperform partially synchronous protocols
for all system and block sizes, reaching throughput from 1.4$\times$
to 2$\times$ higher than Tendermint's. The reason is that
Tendermint does not use pipelining.  Moreover, they also perform
better, from 1.3$\times$ to 7.2$\times$, than HotStuff-2, a partially
synchronous protocol with pipelining.  Even though HotStuff-2 uses
pipelining, it performs better (1.4$\times$) than
Tendermint only with block sizes up to 8 KB.  With bigger
blocks, HotStuff-2's throughput decreases, becoming worse than
Tendermint's.  This is because as we increase the
block size, the real network delay of messages carrying blocks
increases and varies more. Since HotStuff-2 uses a
linear communication pattern, replicas can receive the proposal only
from the leader. Thus, the overall throughput
decreases as we increase the block size.  Lastly, 
up to 8 KB blocks, the throughput of \alterbft is around
1.4$\times$ better than HotStuff-2. This is because even though both
protocols start ordering the next block after collecting a certificate
for the previous block, the certificate in HotStuff-2 requires votes
from more a two-third majority of replicas, while in \alterbft the
votes from the majority are enough.

\subsection{Additional Results}
\label{sec:extra-results}

Due to space constraints, we now briefly report on additional findings.
More details are presented in the Appendix.

We evaluate the performance
of \alterbft and \fastalter under equivocation attacks, where a
Byzantine leader proposes conflicting blocks to different sets of
replicas (Appendix~\ref{sec:perf-attack}).  These experiments demonstrate the importance of chaining,
even under attack, and highlight that the additional $2\Delta_S$ wait after silence or equivocation certificate
that was required for \fastalter commit rule proves
beneficial for throughput during these attacks.

We explore two alternative
approaches for handling large messages in synchronous consensus
protocols (Appendix~\ref{sec:alternatives}).  The first approach limits consensus instances to small
messages, requiring multiple instances to process larger blocks.  This
approach results in significantly higher latency and lower throughput
due to the overhead of executing additional consensus instances.  The
second approach investigates the effects of sending a large message as
multiple smaller messages, but the experiments showed no improvement
in communication delays.  These findings suggest that combining small
and large messages, as done in \alterbft, provides a more efficient
solution for consensus.

We consider more detailed
data on communication delays across different geographical
regions (Appendix~\ref{sec:appendix-msg-delays}). This data further validates our key observation about small
versus large messages (Section \ref{sec:key-observation}) and helps
establish the bounds used in our performance evaluation
(Section \ref{sec:eval-perf}).

\section {Related Work}
\label{sec:related-work}

In this section, we survey consensus algorithms in different system models. 
Primarily, we focus on protocols that tolerate Byzantine failures and are designed for a blockchain context. 

\PAR{Asynchronous Protocols.}
HoneyBadgerBFT \cite{honey-badger} is the first practical purely
asynchronous consensus protocol designed for blockchain.
It improves on the asynchronous atomic broadcast protocol presented in
\cite{cachin-1}, but relies on expensive $n$ concurrent asynchronous binary
Byzantine agreement (ABBA).  Later protocols
proposed various improvements, such as replacing concurrent ABBA
instances by a single asynchronous multi-value validated Byzantine
agreement (MVBA) \cite{dumbo,s-dumbo}, decoupling transaction
dissemination and agreement \cite{nsdi-DispersedLedger}, and executing them completely concurrently \cite{dumbo-ng}.
Asynchronous protocols are robust but perform worse than partially
synchronous and synchronous protocols.  As a consequence, some
protocols use a simpler leader-based deterministic protocol to improve
the latency in good cases \cite{jolteon-ditto, combo-1, combo-2,
  bolt-dumbo}.

\PAR{Partially Synchronous Protocols.} 
The first practical BFT consensus protocol designed for a partially
synchronous system model is PBFT \cite{pbft}, a leader-based protocol
that can commit a value in three communication steps.  Tendermint
\cite{bucham18} has a failure-free communication pattern similar to
PBFT's, but it is based on a rotating leader.  HotStuff is another
partially synchronous protocol designed for blockchains.  HotStuff 's
main goal is to design a leader rotation mechanism that requires
linear communication $O(n)$ and is responsive, meaning that a new
leader needs to wait just for $n-f$ messages before proposing a value
and not for maximum network delay.  The protocol achieves
responsiveness at the expense of additional communication.  HotStuff-2
\cite{hotstuff-2} shows that HotStuff's additional communication is
not justified in practice and achieves responsiveness with no extra
communication in optimistic conditions, e.g., when we have a sequence
of honest leaders and a synchronous network.

\PAR{Synchronous Protocols.} 
Synchronous BFT consensus protocols require a majority of honest
replicas \cite{fitzi}, as opposed to partially synchronous protocols,
which require a two-third majority.  The first synchronous consensus
designed for blockchains is Dfinity \cite{dfinity}.  Contrary to the
early BFT protocols in the synchronous model
\cite{byz-gen,dolev-strong}, Dfinity does not assume lock-step
execution where replicas execute the protocol in rounds and messages
sent at the start of the round arrive by the end of the
round. Instead, it assumes that replicas start the protocol within
$\Delta$ time.  Dfinity's throughput is affected by the maximum
network delay $\Delta$ because every replica at the beginning of each
round waits for $2\Delta$ before casting a vote.  Abraham et
al. \cite{sync-hotstuff} introduced Sync HotStuff, which removes the
effect of maximum network delay on throughput, achieving throughput
comparable to the partially synchronous HotStuff, and also reducing
latency.  A rotating-leader version of Sync HotStuff was introduced in
\cite{rot-sync-hotstuff}.  \alterbft and rotating-leader Sync HotStuff
share similar common-case behavior. However, they have different epoch
synchronization mechanisms and \alterbft's safety does not require
timely delivery of all messages.

An alternative to the synchronous system model is the ``weak
synchronous model'' \cite{mobile-sluggish}.  The model tolerates
Byzantine replicas and allows some honest replicas to be slow, that
is, communication between slow replicas can violate synchrony bounds.
However, this is true only if the actual number of Byzantine failures
is smaller than $f$.  The first BFT consensus protocol presented in
the weak synchronous model was PiLi \cite{pili}, with latency between
$40\Delta$ and $65\Delta$.  In \cite{sync-hotstuff}, the authors
showed how Sync HotStuff can be adapted to the weak synchronous model.

\PAR{Protocols Based on Extended Hardware.}
Some protocols increase resilience by relying on trusted components.
The main idea is to execute key functionality, such as appending to a
log \cite{a2m} or incrementing a counter \cite{trinc}, inside a
trusted execution environment (e.g., Intel SGX enclaves \cite{sgx}).
Extended hardware has been used to allow both PBFT
\cite{srds-verissimo, a2m,trinc, min-bft} and HotStuff
\cite{hot-stuff-m, damysus} to tolerate a minority of Byzantine
replicas.  Recently, the authors in \cite{dissect-bft} identified
fundamental problems with the deployment of such systems and provided
a solution that requires a two-third majority of honest replicas.
\alterbft does not require any trusted components and relies on
synchrony instead.

Another approach is to divide the system into two parts \cite{tbc}: a
synchronous subsystem that transmits control messages, and an
asynchronous subsystem that transmits the payload.  This model was
generalized to the wormhole hybrid distributed system model where
secure and timely components co-exist \cite{tbc-1, tbc-2}.  \alterbft
also differentiates between two types of messages, but does not assume
the existence of any separate subsystem or special components.

\PAR{DAG-Based Protocols.}
HashGraph \cite{hashgraph} introduced the idea of building a directed
acyclic graph (DAG) of messages and designing an algorithm that will
solve BFT consensus just by interpreting the DAG without sending any
additional messages.  Aleph \cite{aleph} improved the DAG structure by
adding rounds, and a round version of the DAG was efficiently
implemented in Narhwal \cite{narwhal}.  Different versions of
DAG-based BFT consensus protocols that built on Narhwal's DAG have
been proposed for both asynchronous
\cite{narwhal,bullshark-async,dag-rider} and partially synchronous
system models \cite{bullshark-partially-sync, bbca-ledger}. All these
systems tolerate fewer than $1/3$ of Byzantine replicas.  Designing a
synchronous DAG-based protocol that can tolerate a minority of
Byzantine replicas is still an open question.

\PAR{Additional Proposals.}
Thunderella \cite{thunderella} points out that the latency of
synchronous BFT consensus protocol does not need to depend on $\Delta$
when the actual number of faults is less than $1/4$ of the replicas.
Protocols whose latency does not depend on $\Delta$ in some special
conditions are called \emph{optimistically responsive}.  Another
optimistically responsive protocol is XPaxos \cite{xft}, which
achieves optimistic responsiveness by finding a group of $f+1$ honest
and synchronous replicas. XPaxos is only practical when the number of
actual faults is a small constant.  While these protocols are stable
leader protocols, \alterbft is a rotating leader protocol that
achieves responsive latency in the absence of failures in the system
\cite{zyzzyva,sbft}.

The hybrid fault model introduced in \cite{hybrid} distinguishes
between different types of failures and proposes different thresholds
for crash and Byzantine failures. Its most recent refinement
\cite{vft} expands the work by adding the threshold for slow
replicas. This approach allowed the design of more cost-efficient
(tolerating the same number of failures with fewer replicas) protocols
in the data center environment.

An orthogonal approach to our work one could use to boost system
throughput is to separate value propagation from consensus.  Namely,
values are reliably broadcast to replicas using large messages and
consensus, using small messages, is used to establish the order of
value hashes \cite{zarko,narwhal,nsdi-DispersedLedger}.  Consequently,
in the synchronous system model, the $\Delta$ of reliable broadcast
would need to account for the delays of large messages, while
consensus can use a smaller bound that accounts for the delays of
small messages only.  This approach generally hurts latency since it
adds additional communication steps before a replica can commit a
value.

\section{Conclusion}
\label{sec:conclusion}

In this paper, we have introduced the hybrid synchronous system model
and \alterbft, a new Byzantine fault-tolerant hybrid synchronous
consensus protocol.  The hybrid synchronous system model distinguishes
between timing assumptions for small messages, which respect time
bounds, and large messages, which may violate bounds but are
eventually timely.  \alterbft delivers higher throughput with
comparable latency to partially synchronous protocols, while needing
only a $\frac{1}{2}$ majority. It also reduces latency by up to
15$\times$ compared to existing synchronous protocols, with similar
throughput.

\bibliographystyle{plain}
\bibliography{main}

\appendix
\clearpage

\section{Appendix}
\label{sec:appendix}

\subsection{\alterbft's Pseudo-code}
\label{sec:pseudo-code}

Algorithms \ref{alg:alter} and \ref{alg:alter-2} present \alterbft's pseudo-code for normal and abnormal case operations, respectively.

\begin{algorithm*}[htp!]
	\footnotesize
	\begin{algorithmic}[1] 
		\INIT{} 
		\STATE $e_p := 0$  
		\COMMENT{the current epoch} 
		\STATE $hasVoted := false$
		\COMMENT{has the replica voted in the current epoch?} 
		\STATE $lockedBC_p := \nil$ 
		\COMMENT{the most recent block certificate the replica is aware of} 
		\STATE $epochsState_p[] := \nil$
		\COMMENT{an epoch can be in one of the states: \ready, \committed, \notcommitted} 
		\STATE $epochDecision_p[] := \nil$ 
		\COMMENT{an epoch decision can be an $\id$ of a committed block or $\nil$} 
		\ENDINIT 

		\SHORTSPACE 
		\STATE \textbf{when} \emph{bootstrapping} \textbf{do} $StartEpoch(0)$ 
		\COMMENT{the execution starts in epoch 0} 

		\SHORTSPACE 
		\PROCEDURE[upon starting an epoch...]{$StartEpoch(epoch)$} \label{line:tab:startEpoch} 
		\STATE $e_p \assign epoch;$ $hasVoted_p \assign false$ 
		\COMMENT{the replica sets the current epoch, resets $hasVoted$ variable, and...}
		\STATE $epochsState_p[e_p] \assign \ready;$ $epochsDecision_p[e_p] \assign \nil;$ 
		\COMMENT{sets epoch state and epoch decision to $\ready$ and $\nil$, respectively}
		\IF[if the replica is the leader in the current epoch, and...]{$\leader(e_p) = p$}
			\IF[$e_p$ is the first epoch or the replica's $lockedBC_p$ is from the previous epoch...]{$e_p=0$ \textbf{or} $lockedBC_p.epoch = e_p-1$}
			\STATE $Propose()$
			\COMMENT{the replica proposes immediately}
			\ELSE[otherwise,...]
			\STATE \textbf{execute} $Propose()$ \textbf{when} $\timeoutEpochChange(e_p)$ expires
			\label{line:tab:timeout-epoch-change-start-alter}
			\COMMENT{the replica waits for the $2\Delta_S$ timeout to learn the most recent certified block} 
			\ENDIF
		\ENDIF
		\ENDPROCEDURE

		\SHORTSPACE 
		\PROCEDURE[in order to propose...]{$Propose()$} \label{line:tab:propose} 
		\STATE $b \assign getBlock()$
		\COMMENT{the leader gets a new block, and...}
		\STATE \Broadcast\ $\li{\propose,e_p, b, lockedBC_p}$
		\label{line:tab:alter-brd-propose}
		\COMMENT{broadcasts the proposal carrying the block and replica's $lockedBC$}
		\STATE \Broadcast \ $\li{\vote,e_p,id(b)}_p$
		\COMMENT{then, it broadcasts a signed vote, and...}
		\label{line:tab:alter-brd-vote}
		\STATE $hasVoted_p \assign true$ 
		\COMMENT{sets $hasVoted_p$ to avoid voting when it receives a proposal from itself}
		\ENDPROCEDURE

		\SHORTSPACE 
		\FUNCTION[the leader proposes...]{$getBlock()$} \label{line:tab:getBlock} 
		\IF[the block extending the most recent certified block...]{$lockedBC_p \neq \nil$}
		\STATE \textbf{return} $Block\{payload: getPayload(), prev: lockedBC_p.id\}$	
		\COMMENT{the leader knows about, or...}
		\ELSE[the block with no predecessor block if...]
		\STATE \textbf{return} $Block\{payload: getPayload(), prev: \nil\}$	
		\label{tab:newValue}
		\COMMENT{it is not aware of any certified block}
		\ENDIF
		\ENDFUNCTION

		\SHORTSPACE
		\STATE \textbf{when} receive $\li{\propose,e,b,BC}$ \textbf{and} $\li{\vote,e,id(vb)}_c$
		\label{line:tab:alter-proposal-received}
		\COMMENT{when the replica receives a proposal and the vote for it...}
		\INDENT[signed by the leader of the current epoch before detecting any misbehavior...]{\textbf{where} $e=e_p$ \textbf{and} $c=\leader(e)$ \textbf{and} $epochsState_p[e_p] = \ready$ \textbf{do}}
		\label{line:tab:recvProposal-1}
			\STATE \textbf{if} $valid(b) \wedge hasVoted_p = false$ $\wedge$
			\label{line:tab:alter-valid-check}
			\COMMENT{if the block is valid, the replica hasn't voted in the current epoch, and...}
			\INDENT[one of the conditions is fullfilled...]{$(condition_1 \vee condition_2)$  \textbf{then}}
			\label{line:tab:accept-proposal-1} 
			\STATE \Broadcast \ $\li{\vote,e_p,id(b)}_p$
			\COMMENT{the replica broadcast a \vote \ message containing block id, and...}
			\STATE $hasVoted_p \assign true$ 
			\COMMENT{sets $hasVoted_p$ so it does not vote twice, if it receives a forwarded or different proposal}
			\STATE \Forward \ $\li{\vote,e,id(b)}_c$
			\COMMENT{then, it (a) forwards the leader's vote, needed for timely equivocation detection, and...}
			\label{line:tab:forward-vote}	
			\STATE \Forward \ $\li{\propose,e,b,BC}$
			\COMMENT{(b) forwards the received proposal, needed for eventual delivery of all certified blocks}
			\label{line:tab:forward-proposal}
			\ENDINDENT
		\ENDINDENT

		\SHORTSPACE
		\STATE $condition_1 \equiv (lockedBC_p = \nil)$ 
		\COMMENT{the replica is unaware of any certified block}
		\SHORTSPACE
		\STATE $condition_2 \equiv (lockedBC_p \neq \nil \wedge BC \neq \nil \wedge BC.epoch \geq lockedBC_p.epoch)$
		\COMMENT{$BC$ from proposal is more recent than $lockedBC_p$}
		\label{line:tab:condition-2}


		\SHORTSPACE
		\COMMENT{\textbf{***Block Certificate***}}
		\WHEN[upon receiving a block certificate...]{receive $f+1$ distinct $\li{\vote,e,\id(b)}_*$ \textbf{or} $\li{\quit,cert}$ \textbf{where} $cert.type = \valueCert$}
		\label{line:tab:block-cert-received-alter}
		\STATE \textbf{if} {$\li{\quit,cert}$ received} \textbf{then} $c \assign cert$
			\label{line:tab:block-cert-start}
			\COMMENT{it can receive it in a \quit \ message, or...}
			\STATE \textbf{else} $c \assign$ $NewCert$ from  $f+1 \li{\vote,e,\id(b)}_*$ 
			\COMMENT{through $f+1$ individual \vote \ messages}

			\IF[if the certificate is from the current epoch...]{$c.epoch=e_p$}
			\STATE $lockedBC_p \assign c$
				\label{line:tab:update-lockedBC}
				\COMMENT{the replica locks on it by updating its $lockedBC_p$}
			\IF[then, if the replica has not received any other certificate yet...]{$epochsState[e_p] = \ready$}
			\STATE \textbf{start} $\timeoutCommit(e_p, c.id)$
			\label{line:tab:start-timeout-commit-alter}
			\COMMENT{the replica starts $timeoutCommit$}
			\ENDIF
			\STATE   \Broadcast \ $\li{\quit,c}$
			\label{line:tab:alter-block-cert}
			\COMMENT{lastly, the replica broadcasts the certificate,and... }
			\STATE $StartEpoch(e_p+1)$ 
			\label{line:tab:alter-start-epoch-1}
			\COMMENT{starts the next epoch}
			\ELSE[in case the certificate is not from the current epoch...]
			\label{line:tab:epoch-change-update-start}
			\STATE \textbf{if} $\leader(e_p) = p$ $\wedge$
			\COMMENT{if the replica is current epoch leader, and...}
			\INDENT[the certificate is more recent than replica's $lockedBC$...]{($lockedBC_p = \nil \vee c.epoch > lockedBC_p.epoch)$ \textbf{then}}
			\STATE $lockedBC_p \assign c$
				\COMMENT{the replica updates its $lockedBC_p$, and...}
				\STATE   \Broadcast \ $\li{\quit,c}$
			\label{line:tab:block-cert-leader}
			\COMMENT{broadcasts the new certificate}
				\label{line:tab:update-lockedBC-leader}
			\ENDINDENT
			\ENDIF
			
		\ENDWHEN
		\SHORTSPACE
		\COMMENT{\textbf{***Regular Commit Rule***}}
		\WHEN[when $timeoutCommit$ expires...]{$timeoutCommit(e, \id)$ expires} \label{line:tab:onTimeoutCommit} 
		\label{line:tab:commit-rule-1} 
			\IF[if the replica did not observe any proof of misbehavior...]{$epochsState[e] = \ready$} 
			\STATE $epochsState[e] \assign \committed$
			\label{line:tab:alter-commit-1}
			\COMMENT{the replica sets epoch state to $\committed$, and...}
			\STATE $epochsDecision[e] \assign \id$
			\COMMENT{the epoch decision value to $\id$}
			\label{line:tab:commit-rule-1-end}	
			\ENDIF	 
		\ENDWHEN
		\SHORTSPACE
		\COMMENT{\textbf{***Fast Commit Rule (FastAlterBFT)***}}
		\WHEN[when the replica receives votes from all replicas for the proposed block...]{receive $\li{\vote,e,id(b)}_*$ from all replicas}
		\label{line:tab:commit-rule-2} 
			\IF[if the replica did not observe any proof of misbehavior...]{$epochsState[e] = \ready$} 
			\STATE $epochsState[e] \assign \committed$
			\label{line:tab:alter-commit-2}
			\COMMENT{the replica sets epoch state to $\committed$, and...}
			\STATE $epochsDecision[e] \assign id(b)$
			\COMMENT{the epoch decision value to $id(b)$}
			\label{line:tab:commit-rule-2-end}	
			\ENDIF	 
		\ENDWHEN

		\SHORTSPACE 
		\STATE  \textbf{when} receive $\li{\propose,e,b,*}$
		\label{line:tab:alter-receive-alue}
		\COMMENT {when the replica receives a proposal...}
		\INDENT[for a block corresponding to epoch's committed block...]{\textbf{where} $epochDecision[e] = id(b)$ \textbf{do}}
			\label{line:tab:recvProposal}
			\STATE $CommitBlock(b)$
			\label{line:tab:alter-commit-value}
			\COMMENT{the replica commits block $b$ and all its uncommitted predecessor blocks}
		\ENDINDENT
	\end{algorithmic} \caption{\alterbft consensus algorithm: normal case}
	\label{alg:alter} 
\end{algorithm*}

\begin{algorithm*}[htp!]
	\footnotesize
	\begin{algorithmic}[1] 
		\UPON[when a replica enters a new epoch...]{starting the epoch $e$}
		\STATE \textbf{start} $\timeoutCertificate(e_p)$
		\label{line:tab:timeoutCert-start-alter}
		\COMMENT{it starts the timer $\Delta_L+4\Delta_S$ used to detect asynchrony or a malicious leader}
		\ENDUPON 
		\SHORTSPACE

		\WHEN[when $timeoutCertificate$ expires...]{$timeoutCertificate(e)$ expires} \label{line:tab:onTimeoutCertificate} 
		\IF[if the replica did not receive any certificate...]{$e = e_p \wedge epochsState[e] = \ready$} 
			\STATE \Broadcast \ $\li{\silence,e_p}_p$ 
			\label{line:tab:blame-leader}
			\COMMENT{the replica broadcasts a $\silence$ message}
		\ENDIF	
		\ENDWHEN
		\SHORTSPACE
		\COMMENT{\textbf{***Silence Certificate***}} 
		\STATE \textbf{when} receive $f+1$ distinct $\li{\silence,e}_*$ \textbf{or} $\li{\quit,cert}$
		\label{line:tab:alter-blame-cert-start}
		\COMMENT {when a replica receives...}
		\INDENT[a silence certificate...] {\textbf{where} $cert.type = \silCert$ \textbf{do}}
		\STATE \textbf{if} $\li{\quit,cert}$ received \textbf{then} $c \assign cert$
			\COMMENT{from a \quit \ message with the certificate, or...}
			\STATE \textbf{else} $c \assign$ $NewCert$ from  $f+1 \li{\silence,e}_*$  
			\COMMENT{from $f+1$ distinct \silence \ messages}
			\label{line:tab:alter-blame-cert-formed}
			\STATE $MisbehaviorDetected(c)$
			\label{line:tab:misbehavior-1}
			\COMMENT{the replica calls $MisbehaviorDetected$ with the certificate as parameter}
		\ENDINDENT
		\SHORTSPACE
		\COMMENT{\textbf{***Equivocation Certificate***}}  
		\STATE \textbf{when} receive $\li{\vote,e,id(b)}_c$ \textbf{and} $\li{\vote,e,id(v')}_c$ \textbf{or} $\li{\quit,cert}$
		\label{line:tab:alter-equivocation}
		\COMMENT {when a replica receives...}
		\INDENT[an equivocation certificate...]{\textbf{where} $c=\leader(e)$ \textbf{and} $v \neq v'$ \textbf{or} $cert.type = \equivCert$ \textbf{do}}
		\STATE \textbf{if} $\li{\quit,cert}$ received \textbf{then} $c \assign cert$
		\COMMENT{from a \quit \ message with the certificate, or...}
		\STATE \textbf{else} $c \assign$ $NewCert$ from  $\li{\vote,e,id(b)}_c$ \textbf{and} $\li{\vote,e,id(v')}_c$
		\COMMENT{from two distinct \vote \ messages signed by the epoch leader}
		\label{line:tab:alter-equivocation-formed}
		\STATE $MisbehaviorDetected(c)$
		\label{line:tab:misbehavior-2}
		\COMMENT{the replica calls $MisbehaviorDetected$ with the certificate as parameter}
		\ENDINDENT
		
		\SHORTSPACE 
		\PROCEDURE[when $MisbehaviorDetected$ is called...]{$MisbehaviorDetected(cert)$} \label{line:tab:missbehaviorDetected} 
		\IF [if the epoch is still active...]{$epochsState[cert.epoch] = \ready$}
			\STATE $epochsState[cert.epoch] \assign \notcommitted$
			\label{line:tab:not-commit}
			\COMMENT{the replica sets state to \notcommitted}
			\IF[moreover, if $cert$ is the first received certificate for the current epoch...]{$cert.epoch=e_p$}
			\STATE \Broadcast \ $\li{\quit,cert}$
			\label{line:tab:alter-broadcast-misbehavior}
			\COMMENT{the replica broadcasts the certificate, and...}
			\STATE  \textbf{start} $timeoutCommit(e_p, \nil)$ 
			\label{line:tab:silence-cert} 
			\COMMENT{triggers $timeoutCommit(e_p,\nil)$ with a special $\nil$ value} 
		\ENDIF
		\ENDIF
		\ENDPROCEDURE

		\SHORTSPACE
		\WHEN[when $timeoutCommit$ for epoch $e$ with a $\nil$ value expires...]{$timeouCommit(e, \nil)$ expires} \label{line:tab:OntimeoutCommit} 
		\label{line:tab:timeout-epoch-change-start}
		\IF[if the replica is still in epoch $e$...]{$e = e_p$} 
			\STATE $StartEpoch(e_p+1)$ 
			\COMMENT{the replica starts the next epoch}
			\label{line:tab:start-epoch-2}
			
		\ENDIF	
		\ENDWHEN
	\end{algorithmic} \caption{\alterbft consensus algorithm: handling malicious leaders and asynchrony}
	\label{alg:alter-2} 
\end{algorithm*}

\subsection{Proof of Correctness}
\label{sec:appendix-proof}
In this section, we present \alterbft's proof of correctness.
The proof consists of five parts: epoch advancement, safety, liveness, block availability, and external validity. 

\subsubsection{Epoch Advancement}
The epoch advancement mechanism ensures all honest replicas move through epochs continuously and start each epoch within 
$\Delta_S$ time. It assumes all honest replicas start epoch 0 within $\Delta_S$ time. 
Notice here that this mechanism only relies on the timely delivery of type \typeS messages. 

\begin{lemma}
	\label{lemma:alter-epoch-start}
	Every honest replica always progresses to the next epoch.
  \end{lemma}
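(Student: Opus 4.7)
The plan is to argue by induction on the epoch number $e$, using the assumption that all honest replicas enter epoch $0$ within $\Delta_S$ of each other (timing assumption from Section~\ref{sec:sysmodel}). The inductive hypothesis is that every honest replica eventually enters epoch $e$; I then need to show that every honest replica eventually enters epoch $e+1$. Since the lemma only concerns progression (not safety or timeliness after GST), I intend to rely solely on the reliability of point-to-point links and on the fact that type \typeS messages between honest replicas always respect $\Delta_S$ --- no assumption about type \typeL messages or about GST is needed.

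\textbf{Main case analysis.} Fix an honest replica $r$ in epoch $e$. Upon entering $e$, replica $r$ starts $timeoutCertificate(e)$. I would split on whether $r$ ever receives a certificate (block, equivocation, or silence) for epoch $e$. If $r$ receives a block certificate, it starts the next epoch immediately (line~\ref{line:tab:alter-start-epoch-1}). If $r$ receives an equivocation or silence certificate, it calls $MisbehaviorDetected$ and starts $timeoutCommit(e,\nil)$, which by construction eventually expires and triggers $StartEpoch(e_p+1)$ (line~\ref{line:tab:start-epoch-2}). So the only worry is that $r$ never receives any certificate at all.

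\textbf{The silence-certificate fallback.} Suppose no honest replica --- and therefore $r$ in particular --- receives any certificate before its $timeoutCertificate(e)$ expires. Then every honest replica broadcasts a $\silence$ message for $e$ (line~\ref{line:tab:blame-leader}). Since $n > 2f$, there are at least $f+1$ honest replicas, and their silence messages are type \typeS and therefore delivered within $\Delta_S$ to every honest replica. Hence $r$ collects $f+1$ distinct $\li{\silence,e}_*$ messages, forms a silence certificate, and proceeds via $MisbehaviorDetected$ and $timeoutCommit(e,\nil)$ as above. In the complementary case, if some honest replica $q$ does obtain a certificate for $e$, then $q$ broadcasts $\li{\quit,\text{cert}}$ (line~\ref{line:tab:alter-broadcast-misbehavior} or line~\ref{line:tab:alter-block-cert}), which is also type \typeS, so $r$ receives it within $\Delta_S$ and is reduced to one of the previous cases.

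\textbf{Main obstacle.} The argument is essentially a case analysis, and the only subtle point is making precise that the two $timeoutCommit$ invocations actually fire. For the silence-path invocation $timeoutCommit(e,\nil)$, this is clear because $\nil$ is treated as a separate key and the timer is never cancelled within epoch $e$. For the block-certificate path, the claim is even simpler: progression is triggered synchronously at line~\ref{line:tab:alter-start-epoch-1} without waiting. I would therefore spend the bulk of the proof on the silence case --- specifically on checking that $r$'s own $timeoutCertificate(e)$ must eventually elapse (which is immediate, since the timer is a finite local duration) and that the $f+1$ silence messages from honest replicas arrive by point-to-point reliability and the \typeS delivery bound. The induction then closes and the lemma follows.
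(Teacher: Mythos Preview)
Your proposal is correct and follows essentially the same approach as the paper: both argue that if an honest replica were stuck in epoch $e$, then $timeoutCertificate(e)$ would expire at every honest replica, each would broadcast a \silence\ message, and the resulting $f+1$ silences would form $C_e(\silence)$ and force progression. Your version is somewhat more explicit --- you frame the argument as induction on $e$ and separately treat the complementary case in which some honest replica $q$ already holds a certificate and forwards it via a \quit\ message --- whereas the paper compresses this into a short contradiction argument that leaves the ``some replica already has a certificate'' branch implicit.
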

  \begin{proof}
	Assume, for the sake of contradiction, that there exists an honest replica $r$ that remains in some epoch $e$ indefinitely. 
	This would imply that in epoch $e$, $r$ did not generate any of the certificates $C_e(B_k)$, $C_e(\silence)$, or $C_e(\double)$. 
  
	However, upon entering epoch $e$, every honest replica starts the $timeoutCertificate(e)$ timer (line \ref{line:tab:timeoutCert-start-alter} in Algorithm \ref{alg:alter-2}). 
	When this timeout expires, if an honest replica has not received any certificate, it broadcasts the \silence\ message (lines \ref{line:tab:onTimeoutCertificate}--\ref{line:tab:blame-leader} 
	in Algorithm \ref{alg:alter-2}). 
  
	Therefore, if no certificate is formed before the $timeoutCertificate(e)$ expires, all honest replicas will broadcast the \silence\ message, 
	resulting in the formation of the blame certificate $C_e(\silence)$. This contradicts the assumption that an honest replica can stay in epoch 
	$e$ indefinitely. Thus, every honest replica must progress to the next epoch.
  \end{proof}


  \begin{lemma}
	\label{lemma:alter-epoch-advancement}
	If an honest replica starts epoch $e$ at time $t$, then all honest replicas will start epoch $e$ by time $t + \Delta_S$.
  \end{lemma}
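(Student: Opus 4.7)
The plan is a proof by induction on $e$. The base case $e = 0$ follows directly from the system-model assumption that all honest replicas start the execution within $\Delta_S$ of each other. For the inductive step, I would assume the claim for all $e' < e$ and focus on the first honest replica $r$ to enter epoch $e$, at some time $t$; since any later entrant $r'$ enters at $t' \geq t$, establishing the bound at $r$ implies it for every $r'$. Lemma~\ref{lemma:alter-epoch-start} guarantees that this first entrant exists.

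Next I would inspect the two code paths by which $r$ can transition to epoch $e$. Either (a) $r$ processes a block certificate $C_{e-1}(B_k)$ for its current epoch and immediately invokes $StartEpoch(e)$ (Algorithm~\ref{alg:alter}, line~\ref{line:tab:alter-start-epoch-1}), or (b) $r$'s $timeoutCommit(e-1, \nil)$ expires at time $t$, meaning $r$ first processed a silence or equivocation certificate for $e-1$ at time $t - 2\Delta_S$ and armed the $2\Delta_S$ timer (Algorithm~\ref{alg:alter-2}, line~\ref{line:tab:start-epoch-2}). In both cases, $r$ broadcasts a \quit\ message carrying the triggering certificate before entering $e$: at time $t$ in case (a) and at time $t - 2\Delta_S$ in case (b). Since certificates are small and \quit\ is a type \typeS message, every honest $q$ receives this \quit\ by time $t + \Delta_S$ in case (a) and by $t - \Delta_S$ in case (b).

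From there I would bound $q$'s entry into epoch $e$. By the inductive hypothesis applied at the moment $r$ entered $e-1$, every honest $q$ has already entered $e-1$ (or later) by the time $r$'s \quit\ could arrive, so the message is not discarded as out-of-epoch. In case (a), the block-certificate handler (line~\ref{line:tab:block-cert-received-alter}) executes irrespective of $q$'s $epochsState$ and invokes $StartEpoch(e)$, so $q$ enters $e$ by $t + \Delta_S$. In case (b), $q$ arms its own $timeoutCommit(e-1, \nil)$ by time $t - \Delta_S$: either $q$ had already armed it through an earlier silence or equivocation certificate, or $q$ arms it upon processing $r$'s \quit\ (which carries such a certificate and arrives by $t - \Delta_S$); in either sub-case the $2\Delta_S$ timer expires by $t + \Delta_S$, and $q$ enters $e$. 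If $q$ meanwhile saw a block certificate for $e-1$, it entered $e$ even sooner.

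The main obstacle is making the case analysis on $q$'s state at the moment the \quit\ arrives exhaustive, and verifying that no guard in the pseudocode (in particular the silence/equivocation path, which tests $epochsState = \ready$) prevents $q$ from progressing. The argument depends on three ingredients that have to be combined carefully: the block-certificate handler being unconditional, \quit\ carrying a self-contained certificate that any recipient can replay, and the synchrony bound $\Delta_S$ applying uniformly to every type \typeS message including \quit. Threading the inductive hypothesis through to rule out stragglers in epochs earlier than $e-1$ is where slips are most likely, so I would isolate that piece as a short sub-claim.
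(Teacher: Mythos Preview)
Your proposal is correct and follows the same mechanism as the paper: both case-split on whether the first entrant reaches epoch $e$ via a block certificate (entering immediately after broadcasting the \quit) or via a silence/equivocation certificate (entering $2\Delta_S$ after broadcasting the \quit), and both use the $\Delta_S$ bound on the type-\typeS\ \quit\ message to conclude that every honest recipient is pushed into epoch $e$ by $t+\Delta_S$.

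Where you differ is structure. The paper gives a direct argument for an arbitrary honest replica $r$ and simply asserts that recipients of the forwarded certificate ``start epoch $e$''; it neither invokes induction on $e$ nor checks that a recipient $q$ is already in epoch $e-1$ when the certificate arrives (which is required for the handlers at lines~\ref{line:tab:block-cert-received-alter} and~\ref{line:tab:missbehaviorDetected} to advance $q$). Your inductive framing, together with restricting attention to the \emph{first} honest entrant, supplies exactly that missing precondition, and your base case $e=0$ (from the start-within-$\Delta_S$ model assumption) is something the paper's proof omits entirely. So: same core idea, but your decomposition is the more rigorous one, and your closing remark about the $epochsState=\ready$ guard on the silence/equivocation path is a detail the paper does not surface.
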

  \begin{proof}
	Suppose an honest replica $r$ starts epoch $e$ at time $t$. This implies that $r$ either received and broadcasted $C_{e-1}(B_k)$ at time 
	$t$ (line \ref{line:tab:alter-block-cert} in Algorithm \ref{alg:alter}), or received and broadcasted $C_{e-1}(\silence)$ or $C_{e-1}(\double)$ 
	at time $t - timeoutCommit(2\Delta_S)$ (line \ref{line:tab:alter-broadcast-misbehavior} in Algorithm \ref{alg:alter-2}).
  
	Since messages with certificates (\quit\ messages) are of type \typeS, they will be delivered within $\Delta_S$ time. Therefore, in the first case, all honest replicas receive $C_{e-1}(B_k)$ by time $t + \Delta_S$ and start epoch $e$. In the second case, all honest replicas receive $C_{e-1}(\silence)$ or $C_{e-1}(\double)$ by time $t - \Delta_S$ and subsequently start epoch $e$ within $2\Delta_S$, resulting in the same deadline of $t + \Delta_S$.
  
	It is also possible that while an honest replica is waiting for $timeoutCommit(e-1)$ to expire, it may receive $C_{e-1}(B_k)$. In such a case, 
	the replica will broadcast $C_{e-1}(B_k)$, and start epoch $e$ (lines \ref{line:tab:alter-block-cert}--\ref{line:tab:alter-start-epoch-1} in Algorithm \ref{alg:alter}). 
	All honest replicas will then receive this message and, if they have not already done so, will start epoch $e$.
  
	Therefore, all honest replicas start epoch $e$ by time $t + \Delta_S$.
  \end{proof}

  \begin{theorem}(Epoch synchronization)
	\label{theorem:alter-continuous-epochs}
	All honest replicas continuously move through epochs, with each replica starting a new epoch within $\Delta_S$ time of any other honest replica.
	\end{theorem}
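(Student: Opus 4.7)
The plan is to derive the theorem directly from the two preceding lemmas by induction on the epoch number $e$. Lemma~\ref{lemma:alter-epoch-start} provides the ``continuously moves through epochs'' part, while Lemma~\ref{lemma:alter-epoch-advancement} provides the $\Delta_S$ synchronization bound for a single step. What remains is to glue them together across all epochs.

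For the base case $e=0$, I would invoke the timing assumption from Section~\ref{sec:sysmodel}, which explicitly states that all honest replicas start the execution (i.e., enter epoch $0$) within $\Delta_S$ of each other. For the inductive step, assume that all honest replicas start epoch $e$ within a $\Delta_S$ window. By Lemma~\ref{lemma:alter-epoch-start}, each honest replica eventually leaves epoch $e$ and enters epoch $e+1$, so continuity is preserved. Let $r$ be the first honest replica to enter epoch $e+1$, at some time $t$. Then Lemma~\ref{lemma:alter-epoch-advancement} applied to $r$ yields that every honest replica enters epoch $e+1$ by $t + \Delta_S$, closing the induction.

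There is no real obstacle here since the two lemmas already encapsulate the hard work; the only thing to be careful about is that the two properties must be proved jointly by induction (both ``eventual progress'' and ``$\Delta_S$-synchronization'' carry over to the next epoch together), rather than proving one property for all epochs first and then the other. I would also briefly note that this argument relies exclusively on the timely delivery of type \typeS messages---the \quit\ messages carrying certificates---so the conclusion holds unconditionally under the hybrid synchronous model, without any dependence on GST or on the delivery of type \typeL proposal messages.
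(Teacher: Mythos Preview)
Your proposal is correct and follows essentially the same approach as the paper: both derive the theorem directly by combining Lemma~\ref{lemma:alter-epoch-start} (progress) with Lemma~\ref{lemma:alter-epoch-advancement} ($\Delta_S$-synchronization). Your version is slightly more explicit---you spell out the induction on $e$, invoke the epoch-$0$ start assumption from Section~\ref{sec:sysmodel} for the base case, and remark that only type~\typeS messages are involved---whereas the paper simply states that the two lemmas combine to give the result without making the induction or base case explicit.
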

	
	\begin{proof}
	We prove this theorem by combining Lemma \ref{lemma:alter-epoch-start} and Lemma \ref{lemma:alter-epoch-advancement}.
	
	First, from Lemma \ref{lemma:alter-epoch-start}, we know that every honest replica always moves to the next epoch. 
	This ensures that no honest replica remains stuck in any epoch indefinitely.
	
	Second, from Lemma \ref{lemma:alter-epoch-advancement}, we know that if an honest replica starts epoch $e$ at time $t$, 
	then all honest replicas start epoch $e$ by time $t+\Delta_S$. This guarantees that all honest replicas start each epoch 
	within $\Delta_S$ time of each other.
	
	Combining these two results, we can conclude that all honest replicas continuously move through epochs, with each replica 
	initiating a new epoch within $\Delta_S$ time of any other honest replica.
	\end{proof}

\subsubsection{Safety}
The following Lemmas and Theorem are related to the \alterbft's safety. Namely, the protocol ensures all replicas 
agree on the same blockchain (i.e., forks does not exist). 

\begin{lemma} 
	\label{lemma:alter-main}
	If an honest replica directly commits block $B_k$ in epoch $e$, then 
	(i) no block different from $B_k$ can be certified in epoch $e$, and 
	(ii) every honest replica locks on block $B_k$ in epoch $e$.  
  \end{lemma}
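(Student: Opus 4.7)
The plan is to case-split on which commit rule fired when $r$ directly committed $B_k$ in epoch $e$: either the regular rule (Algorithm~\ref{alg:alter}, lines \ref{line:tab:commit-rule-1}--\ref{line:tab:commit-rule-1-end}) or the fast rule (lines \ref{line:tab:commit-rule-2}--\ref{line:tab:commit-rule-2-end}). In both cases I would establish invariants (i) and (ii) separately, exploiting the fact that every message carrying a certificate or forwarding a leader's vote is of type~\typeS\ and is therefore delivered within $\Delta_S$ by the hybrid synchronous timing assumption of Section~\ref{sec:timeassump}.

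\textbf{Regular commit case.} Let $t$ be the time at which $r$ received $C_e(B_k)$, started $timeoutCommit(e) = 2\Delta_S$, broadcast $C_e(B_k)$ in a \quit\ message, and locked on $B_k$; $r$ then commits at $t + 2\Delta_S$. For (i), suppose for contradiction that some honest replica $q$ voted for a block $B'_{k'} \neq B_k$ in epoch $e$. Since $r$'s \quit\ reaches $q$ within $\Delta_S$ and would prevent $q$ from voting for a conflicting block (lines \ref{line:tab:accept-proposal-1}--\ref{line:tab:condition-2}), $q$ must have voted strictly before $t + \Delta_S$; since every voter forwards the leader's signed vote (line \ref{line:tab:forward-vote}), $r$ receives the leader's vote for $B'_{k'}$ by $t + 2\Delta_S$. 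The same forwarding argument, applied to the honest voter whose vote is part of $C_e(B_k)$, ensures $r$ also possesses the leader's vote for $B_k$; combining the two yields $C_e(\double)$, which marks the epoch \notcommitted\ (line \ref{line:tab:not-commit}) and blocks the commit---contradiction. For (ii), an honest replica $q$ can fail to lock on $C_e(B_k)$ only by advancing past epoch $e$ via $C_e(\double)$ or $C_e(\silence)$ received no later than $t + \Delta_S$; but then $q$ forwards that certificate in a \quit\ message (line \ref{line:tab:alter-broadcast-misbehavior}), which $r$ receives by $t + 2\Delta_S$, and again $r$ would not commit.

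\textbf{Fast commit case and main obstacle.} Let $t'$ be the time $r$ commits via the fast path after collecting $\vote$ messages for $B_k$ from every replica, and let $t \leq t'$ be the time at which $r$ first assembled $f+1$ such votes, updated $lockedBC$ and broadcast $C_e(B_k)$ (line \ref{line:tab:alter-block-cert}). Invariant (i) is immediate: all $n-f$ honest replicas voted for $B_k$, so the one-vote-per-epoch rule (line \ref{line:tab:recvProposal-1}) leaves at most $f$ votes for any other block---insufficient to certify. Invariant (ii) is the main obstacle, since a fast commit can occur strictly before $t + 2\Delta_S$ and the previous timing argument no longer applies. The plan is to exploit the design choice that, upon first receiving $C_e(\double)$ or $C_e(\silence)$ at some time $t''$, an honest replica $q$ starts $timeoutCommit(e,\nil) = 2\Delta_S$ (line \ref{line:tab:silence-cert}) and leaves epoch $e$ only when this timer expires or $C_e(B_k)$ arrives. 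Because $r$ collected $q$'s vote, $q$ had not detected misbehavior when voting; and since $q$ would otherwise have forwarded the blame certificate and prevented $r$'s fast commit, one obtains $t'' \geq t' - \Delta_S$, while the reverse inequality $t'' < t + \Delta_S$ (otherwise $q$ already has $C_e(B_k)$ and locks directly) pins $t''$ into $[t'-\Delta_S,\,t+\Delta_S)$. Combining $t \leq t'$ with $t'' \geq t' - \Delta_S$ gives $t'' + 2\Delta_S \geq t + \Delta_S$, and since $r$'s broadcast of $C_e(B_k)$ at time $t$ reaches $q$ by $t + \Delta_S$, $q$ receives the block certificate before its timer fires, locks on $C_e(B_k)$ (line \ref{line:tab:update-lockedBC}), and only then advances---establishing invariant (ii).
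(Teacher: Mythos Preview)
Your proof is correct and follows essentially the same approach as the paper's: a case split on the two commit rules, with part~(i) handled via the forwarded leader votes and part~(ii) via the forwarded \quit\ certificates, all relying on the $\Delta_S$ bound for type~\typeS\ messages. The only substantive difference is cosmetic: you anchor time at the moment $r$ first assembles $C_e(B_k)$ (your $t$), whereas the paper anchors at the moment $r$ commits; and in the fast-commit case you argue constructively (pinning $t''$ into $[t'-\Delta_S,\,t+\Delta_S)$ and showing $q$'s $2\Delta_S$ timer cannot fire before $C_e(B_k)$ arrives), while the paper argues the contrapositive (if $q$ had already moved, $r$ would have received the forwarded misbehavior certificate before even starting $timeoutCommit$). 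These are equivalent.

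One minor imprecision: in the regular case for~(i), your citation of lines~\ref{line:tab:accept-proposal-1}--\ref{line:tab:condition-2} as the reason $q$ cannot vote after receiving $C_e(B_k)$ is not quite the operative mechanism. What actually blocks the vote is that receiving the block certificate triggers $StartEpoch(e+1)$ (line~\ref{line:tab:alter-start-epoch-1}), after which the guard $e=e_p$ on line~\ref{line:tab:recvProposal-1} fails for any epoch-$e$ proposal. The locking conditions you cite would also happen to block the vote (since after locking, $lockedBC_p.epoch=e$ while any epoch-$e$ proposal carries a certificate from a strictly earlier epoch), but that is incidental; the paper relies on the epoch advance.
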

  \begin{proof}
	\alterbft has two commit rules. We need to show that the lemma holds in both scenarios. 
  
	First, consider the general case where an honest replica $r$ directly commits $B_k$ at time $t$ because $timeoutCommit(e) = 2\Delta_S$ 
	expired and it did not receive any blame or equivocation certificate (lines \ref{line:tab:commit-rule-1}--\ref{line:tab:commit-rule-1-end} 
	in Algorithm \ref{alg:alter}). This implies that at time $t-2\Delta_S$, $r$ received $C_e(B_k)$, locked on it, and started $timeoutCommit(e)$. 
	Additionally, $r$ broadcast $C_e(B_k)$. Since this message is of type \typeS, all honest replicas received $C_e(B_k)$ within $\Delta_S$ time, 
	by $t-\Delta_S$.
  
	For part (i), assume for contradiction that an honest replica $q$ received and voted for a block $B_l \neq B_k$ in epoch $e$. Since every honest 
	replica votes only once, $q$ must have received the proposal and leader's vote for $B_l$ before receiving $C_e(B_k)$, i.e., at time 
	$t_1 < t-\Delta_S$. Upon voting for $B_l$, $q$ broadcast the leader's vote (line \ref{line:tab:forward-vote} in Algorithm \ref{alg:alter}). 
	Consequently, $r$ would receive the leader's vote for $B_l$ by $t_1 + \Delta_S$, which is before $t$. Moreover, since $r$ received $C_e(B_k)$ 
	at $t-2\Delta_S$, we know that at least one honest replica voted for $B_k$ at some moment $t_2 < t-2\Delta_S$. Therefore, $r$ would receive the 
	leader's vote for $B_k$ by $t_2 + \Delta_S$. Since both leader's votes for $B_k$ and $B_l$ would arrive at $r$ before $t$, a $C_e(\double)$ 
	certificate would be constructed, and $r$ would not commit (line \ref{line:tab:not-commit} in Algorithm \ref{alg:alter-2}). 
	This is a contradiction. Therefore, property (i) holds as no honest replica votes for a block different from $B_k$, otherwise $r$ would not commit.
  
	For part (ii), it suffices to prove that every honest replica receives $C_e(B_k)$ before moving to the next epoch. This is sufficient because, 
	due to (i), $B_k$ is the only certified block in epoch $e$, and since $e$ is the current epoch, there is no more recent block certificate. 
	Consequently, if an honest replica receives $C_e(B_k)$ in epoch $e$, it will update its $lockedBC$ to it (line \ref{line:tab:update-lockedBC} 
	in Algorithm \ref{alg:alter}). Since we know all honest replicas will receive $C_e(B_k)$ by $t-\Delta_S$, we need to prove that no honest 
	replica will start epoch $e+1$ before $t-\Delta_S$. 
  
	Assume, for contradiction, that an honest replica $q$ moves to epoch $e+1$ at $t_1 < t-\Delta_S$ without receiving $C_e(B_k)$. Since $C_e(B_k)$ 
	is the only block certificate in epoch $e$, $q$ must have moved to epoch $e+1$ because it received $C_e(\silence)$ or $C_e(\double)$. Since $q$ 
	broadcasts $C_e(\silence)$ or $C_e(\double)$ (line \ref{line:tab:alter-broadcast-misbehavior} in Algorithm \ref{alg:alter-2}) at time $t_1$, 
	$r$ would receive them by $t_1 + \Delta_S$. Since $t > t_1 + \Delta_S$, $r$ would not commit $B_k$, a contradiction. Note that waiting for 
	$timeoutCommit(e) = 2\Delta_S$ (line \ref{line:tab:silence-cert} in Algorithm \ref{alg:alter-2}) after receiving $C_e(\silence)$ or $C_e(\double)$ 
	is not needed in this case.
  
	Now consider the case where $r$ commits due to the \fastalter commit rule (lines \ref{line:tab:commit-rule-2}--\ref{line:tab:commit-rule-2-end} 
	in Algorithm \ref{alg:alter}). Specifically, this means $r$ starts $timeoutCommit(e)$ at $t-2\Delta_S$ and commits at some moment $t_1 < t$ after 
	receiving votes from all replicas. Part (i) trivially holds because if $r$ received votes for $B_k$ from all replicas, this means that all honest 
	replicas (f+1) voted for $B_k$. Since honest replicas vote only once in an epoch, no other $B_k' \neq B_k$ can collect (f+1) votes and be certified
	 in epoch $e$. 
  
	For part (ii), every replica needs to wait $timeoutCommit(e) = 2\Delta_S$ before moving to the next epoch in case it receives $C_e(\silence)$ or 
	$C_e(\double)$ first (line \ref{line:tab:silence-cert} in Algorithm \ref{alg:alter-2}). Assume, for contradiction, that an honest replica $q$ 
	moved to epoch $e+1$ before receiving $C_e(B_k)$, namely before $t-\Delta_S$. Again, due to (i), replica $q$ moved to epoch $e+1$ because it 
	received $C_e(\silence)$ or $C_e(\double)$. Due to the extra $2\Delta_S$ timeout, it must have received one of these certificates at some moment 
	$t_2 < t-\Delta_S-2\Delta_S$. Since $q$ would forward the received certificate at $t_2$, all honest replicas, including $r$, would receive this 
	certificate by $t_2 + \Delta_S$, and since this is before $t-2\Delta_S$, $r$ would not start $timeoutCommit(e)$ at $t-2\Delta_S$ and would not 
	commit, a contradiction.   
  
	Therefore, both parts (i) and (ii) hold.
  \end{proof}

  \begin{lemma}
	\label{lemma:alter-all-lock}
	If $C_e(B_k)$ is the only certified block in epoch $e$ and $f + 1$ honest replicas lock on block $B_k$ in epoch $e$, then in all epochs $e' > e$ these replicas will only vote for blocks that extend $B_k$.
  \end{lemma}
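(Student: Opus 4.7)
The plan is to proceed by strong induction on the epoch $e' > e$, leveraging the voting rule $condition_2$ from Algorithm~\ref{alg:alter}: an honest replica with a non-$\nil$ $lockedBC$ votes for a proposal only if the attached certificate $BC$ satisfies $BC.epoch \geq lockedBC.epoch$ and the proposed block extends the block carried by $BC$. The central leverage is quorum intersection: in the synchronous BFT setting $n = 2f+1$, the assumption that $f+1$ honest replicas lock on $B_k$ forces the remaining $f$ non-locked replicas to all be Byzantine, so every certificate of $f+1$ signed votes must contain at least one vote from a locked replica.

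For the base case $e' = e+1$, each locked $r$ still has $lockedBC_r = C_e(B_k)$ on entering $e+1$, since no certificate from any epoch $>e$ has yet been produced. For $r$ to vote on a proposal, $condition_2$ forces $BC.epoch \geq e$, and since no later certificates exist, $BC.epoch = e$; by the uniqueness of $C_e(B_k)$ in epoch $e$, we obtain $BC = C_e(B_k)$, so the proposed block extends $B_k$.

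For the inductive step I would first strengthen the inductive hypothesis to the invariant: every block certified in any epoch $e'' \in (e, e')$ extends $B_k$, and therefore any $lockedBC$ held by a locked replica during this interval references a block that is $B_k$ or an extension of it. Quorum intersection closes this strengthening: any such certificate has $f+1$ signed votes, at least one from a locked replica, whose vote by the inductive hypothesis was cast for an extension of $B_k$. Consequently, every update to $lockedBC_r$ between $e$ and $e'$ still yields a descendant of $B_k$. To finish, when a locked $r$ votes in epoch $e'$, $condition_2$ gives $BC.epoch \geq lockedBC_r.epoch \geq e$, so $BC$ is a certificate from some epoch in $[e, e')$ whose block extends $B_k$ by the strengthened invariant; hence the proposed block extends $B_k$.

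The main obstacle is sustaining the invariant that $lockedBC$ remains on the $B_k$ chain across updates, because an honest replica updates $lockedBC$ upon any block certificate it sees for its current epoch, not only when it itself voted. Quorum intersection under $n = 2f+1$ is what closes this gap: it ensures that every fresh certificate carries at least one locked vote, anchoring each new $lockedBC$ back to the inductive hypothesis and precluding any off-chain block from ever becoming certified.
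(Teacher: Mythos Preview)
Your approach is essentially the same as the paper's: strong induction on the epoch number, using $condition_2$ together with quorum intersection to conclude that any block certified after epoch $e$ must extend $B_k$, so each locked replica's $lockedBC$ remains on the $B_k$ chain and its votes in later epochs are constrained accordingly. Your write-up is more explicit than the paper's about the strengthened inductive invariant and about why quorum intersection forces every fresh certificate to carry a vote from the locked set; the paper's inductive step leaves these points implicit.
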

  \begin{proof}
  Let set $C$ contain $f+1$ or more honest replicas that lock on $B_k$ in epoch $e$.
  We prove this lemma by induction on the epoch number.
  
  \textbf{Base step} ($e' = e + 1$) :
  A replica $r \in C$ will only vote for a block $B_{k'}$ in epoch $e'$ if $B_{k'}$ extends a block certified in an epoch greater than or equal to $e$ (line \ref{line:tab:condition-2} in Algorithm \ref{alg:alter}). 
  Since $e$ is the previous epoch and the highest in the system, and $B_k$ is the only certified block in epoch $e$, the lemma holds trivially for $e' = e+1$.
  
  \textbf{Induction step} ($e' \rightarrow e' + 1$):
  Assume the lemma holds for until epoch $e' + 1$. We will show it holds for $e' + 1$ also.
  
  From the induction hypothesis, in epochs $e + 1$ to $e' + 1$, replicas in $C$ only vote for blocks that extend $B_k$. Let $B_l$ be the last block to receive $f + 1$ vote messages in some epoch $e''$ where $e+1 \leq e'' \leq e'-1$. 
  Therefore, for all replicas in $C$, $lockedBC = C_{e''}(B_l)$ and it follows that $B_l$ extends $B_k$. As a result, a replica will only vote for a block $B_{k'}$ in $e'$ if $B_{k'}$ extends $B_l$ and therefore $B_k$.
  
  By induction, the lemma holds for all epochs $e' > e$.
  \end{proof}

  \begin{lemma}
	\label{lemma:alter-uniq-ext}
	If an honest replica directly commits block $B_k$ in epoch $e$, then any
	block $B_l$ that is certified in epoch $e' > e$ must extend $B_k$. 
	\end{lemma}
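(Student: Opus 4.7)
The plan is to derive this lemma as a fairly direct consequence of the two preceding lemmas (\ref{lemma:alter-main} and \ref{lemma:alter-all-lock}), so the proof will mainly consist of carefully stitching them together and handling the quorum-intersection argument at the very end.

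First I would apply Lemma~\ref{lemma:alter-main} to the hypothesis that an honest replica directly commits $B_k$ in epoch $e$. This immediately gives me two facts: (i) $B_k$ is the only block that can be certified in epoch $e$, hence $C_e(B_k)$ is the unique block certificate for that epoch; and (ii) every honest replica locks on $B_k$ in epoch $e$, that is, sets $lockedBC = C_e(B_k)$ by the time it leaves $e$. Since the system assumes $n > 2f$, the number of honest replicas is at least $f+1$, so the hypotheses of Lemma~\ref{lemma:alter-all-lock} are satisfied with the set $C$ being the set of all honest replicas.

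Next I would invoke Lemma~\ref{lemma:alter-all-lock} to conclude that in every epoch $e' > e$, each honest replica in $C$ will only cast a \vote\ message for a block that extends $B_k$. The final step is a standard quorum-intersection argument: any block $B_l$ certified in epoch $e'$ must, by definition of a block certificate, have collected $f+1$ distinct signed \vote\ messages (line \ref{line:tab:block-cert-received-alter} of Algorithm~\ref{alg:alter}). Since there are at most $f$ Byzantine replicas, at least one of these $f+1$ votes comes from an honest replica, and by the previous step that honest replica only votes for blocks extending $B_k$. Therefore $B_l$ extends $B_k$, as required.

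The only subtle point, and arguably the main obstacle, is justifying that Lemma~\ref{lemma:alter-all-lock} is really applicable here: its statement requires $f+1$ honest replicas that \emph{lock on $B_k$ in epoch $e$}, and I need to confirm that invariant (ii) of Lemma~\ref{lemma:alter-main} indeed delivers this locking \emph{before} those replicas progress to epoch $e+1$. This is precisely what was established in the second half of the proof of Lemma~\ref{lemma:alter-main} (every honest replica receives $C_e(B_k)$ and updates $lockedBC$ prior to leaving epoch $e$), so I would cite that portion explicitly rather than re-deriving it. With that citation in place, the rest of the argument is a one-line chain and the lemma follows.
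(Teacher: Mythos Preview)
Your proposal is correct and follows essentially the same approach as the paper: apply Lemma~\ref{lemma:alter-main} to obtain uniqueness of $C_e(B_k)$ and that all (hence at least $f{+}1$) honest replicas lock on $B_k$, then invoke Lemma~\ref{lemma:alter-all-lock} and finish with the quorum-intersection argument that any certificate needs $f{+}1$ votes, one of which must come from an honest replica. Your write-up is slightly more explicit about the $n>2f \Rightarrow |\text{honest}|\ge f{+}1$ step and the intersection at the end, but the structure is identical to the paper's proof.
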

	\begin{proof}
	The proof follows directly from Lemmas \ref{lemma:alter-main} and 
	\ref{lemma:alter-all-lock}. More precisely, if an honest replica directly
	commits block $B_k$ in epoch $e$, by Lemma \ref{lemma:alter-main}, we know that 
	$f+1$ honest replicas (set $C$) lock on block $B_k$ in epoch $e$ and $B_k$ is the only certified block in epoch $e$. Consequently,
	by Lemma \ref{lemma:alter-all-lock}, replicas from $C$ vote only for the blocks
	extending block $B_k$ in epochs $e'>e$. Therefore, no block $B_l$ that 
	does not extend $B_k$ can collect $f+1$ votes and thus cannot be 
	certified in any epoch $e'>e$. 
	\end{proof}
  
  \begin{theorem}(Safety)
  \label{theorem:alter-safety}
  No two honest replicas commit different blocks at the same height. 
  \end{theorem}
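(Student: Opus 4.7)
The plan is to reduce the theorem to a comparison of directly committed blocks and then invoke Lemma~\ref{lemma:alter-uniq-ext}. Every commit by an honest replica of a block at height $k$ is either a \emph{direct} commit (triggered in some epoch $e$ by the regular $timeoutCommit$ rule or the \fastalter rule) or an \emph{indirect} commit obtained as the height-$k$ ancestor of a directly committed descendant. Because blocks are linked by collision-resistant hashes, the height-$k$ ancestor of any given block is uniquely determined, so reasoning about direct commits suffices.

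First, I would fix two honest replicas $r_1$ and $r_2$ that commit blocks $B$ and $B'$ at height $k$, and trace each commit back to the direct commit that caused it: let $r_1$ directly commit $B^{(1)}$ in epoch $e_1$, where $B$ is the height-$k$ ancestor of $B^{(1)}$, and analogously for $r_2$ with $B^{(2)}$ and $e_2$. Without loss of generality assume $e_1 \leq e_2$. If $e_1 = e_2$, Lemma~\ref{lemma:alter-main}(i) ensures that a single block is certified in that epoch, so $B^{(1)} = B^{(2)}$, and their height-$k$ ancestors must coincide. If $e_1 < e_2$, note that both commit rules require $B^{(2)}$ to be certified (either by $f+1$ votes directly or by a block certificate), so Lemma~\ref{lemma:alter-uniq-ext} forces $B^{(2)}$ to extend $B^{(1)}$. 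Consequently $B$ and $B'$ are the height-$k$ ancestors of the same block $B^{(2)}$, and by the uniqueness of ancestors in a hash-linked chain we conclude $B = B'$.

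The main obstacle I expect is the bookkeeping around the indirect-commit reduction: an honest replica's decision to commit $B$ at height $k$ need not correspond to an explicit commit action in epoch $e_1$, and different replicas may reach height $k$ via direct commits in unrelated epochs. I would handle this by stating once and for all that every committed block either equals or is an ancestor of some directly committed block in the replica's local chain, and then relying on hash-linking (together with the collision-resistance assumption from Section~\ref{sec:sysmodel}) to rule out two distinct ancestors at the same height. A secondary subtlety is that the \fastalter commit rule can trigger without an explicit certificate broadcast; however, Lemma~\ref{lemma:alter-main} already unifies the regular and fast rules, so Lemma~\ref{lemma:alter-uniq-ext} applies in both situations and no additional case analysis is required beyond the epoch comparison above.
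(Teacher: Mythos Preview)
Your proposal is correct and follows essentially the same route as the paper: reduce each commit at height $k$ to a direct commit of some descendant, observe that directly committed blocks are certified, and then apply Lemma~\ref{lemma:alter-uniq-ext} to conclude that the later direct commit extends the earlier one, forcing the height-$k$ ancestors to coincide. You are in fact slightly more careful than the paper, which orders the two direct commits by their heights $l<l'$ rather than by their epochs; ordering by epoch, as you do, is what Lemma~\ref{lemma:alter-uniq-ext} actually requires, and your explicit treatment of the equal-epoch case via Lemma~\ref{lemma:alter-main}(i) fills a small gap the paper leaves implicit.
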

  \begin{proof}
  Suppose, for the sake of contradiction, that two distinct blocks $B_k$ and $B_k'$ 
  are committed for the height $k$. Suppose $B_k$ is committed as a result of $B_l$ 
  being directly committed in epoch $e$ and $B_k'$ is committed as a result 
  of $B_{l'}$ being directly committed in epoch $e'$. 
  Without loss of generality, assume $l<l'$. Note that all directly 
  committed blocks are certified. This is true because both commit rules require that replica receives $C_e(B_k)$ before directly commiting $B_k$ in epoch $e$ (lines \ref{line:tab:commit-rule-1} and \ref{line:tab:commit-rule-2} in Algorithm \ref{alg:alter}). 
  By Lemma \ref{lemma:alter-uniq-ext}, $B_{l'}$ extends $B_l$. Therefore, $B_k = B_k'$ which is a required contradiction.
  
  \end{proof}

  \subsubsection{Liveness}
  
  The following Lemmas and Theorem are related to the \alterbft's liveness. Namely, the protocol ensures that the new blocks 
  are continuously committed and added to the blockchain. 

  \begin{lemma} 
	\label{lemma:alter-commit-epoch}
	If the epoch $e$ is after GST and the leader of the epoch is an honest replica, all honest replicas commit a block in this epoch.
  \end{lemma}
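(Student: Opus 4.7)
The plan is to verify the two sufficient conditions already identified in Section~\ref{sec:liveness}: (1) the honest leader proposes a block that every honest replica accepts and votes for, and (2) no honest replica broadcasts a \silence\ message in epoch $e$ before a block certificate is formed. Once both hold, no $C_e(\double)$ can exist (an honest leader proposes a single block) and no $C_e(\silence)$ can exist (fewer than $f+1$ silence messages), so the only certificate any honest replica sees in epoch $e$ is $C_e(B_k)$; by the regular commit rule, each honest replica will then commit when $timeoutCommit(e)$ expires.

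The first step is to establish condition (1). By Theorem~\ref{theorem:alter-continuous-epochs}, if an honest replica enters epoch $e$ at time $t$, the honest leader $l$ enters by $t + \Delta_S$, and every other honest replica also enters by $t + \Delta_S$. I would then argue that $l$'s $lockedBC$ dominates every other honest replica's $lockedBC$ at proposal time: either $l$ already holds $C_{e-1}(B_{k-1})$ (in which case it proposes immediately), or it starts $timeoutEpochChange(e) = 2\Delta_S$, during which all honest replicas broadcast their $lockedBC$ (as type \typeS messages) upon entering epoch $e$, so $l$ receives all of them by $t + 2\Delta_S$. Consequently the $BC$ field in $l$'s proposal is at least as recent as any honest replica's $lockedBC$, so $condition_2$ in line~\ref{line:tab:condition-2} holds and every honest replica that receives the \propose\ and the leader's \vote\ will vote.

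The second step is a timing budget to rule out \silence. Since the proposal is broadcast no later than $t + \Delta_S + 2\Delta_S$ and is a type \typeL\ message sent after GST, every honest replica receives it by $t + 3\Delta_S + \Delta_L$. Votes (type \typeS) then arrive by $t + 4\Delta_S + \Delta_L$, at which point a block certificate $C_e(B_k)$ exists at every honest replica. Because each honest replica started $timeoutCertificate(e) = \Delta_L + 4\Delta_S$ no earlier than its own entry time (which is at least $t$), the certificate is seen before the timer fires, so no honest replica broadcasts \silence.

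The main obstacle is the careful timing argument: the bound $4\Delta_S + \Delta_L$ is tight only because epoch entry times among honest replicas differ by at most $\Delta_S$ (Theorem~\ref{theorem:alter-continuous-epochs}), and one must be careful that a replica entering epoch $e$ late still has enough of $timeoutCertificate(e)$ remaining to observe the certificate. I would handle this by anchoring the budget to the earliest honest entry time $t$ and showing that even the latest entrant (by $t + \Delta_S$) receives $C_e(B_k)$ before its own $timeoutCertificate$ fires, which follows since $(t + \Delta_S) + (\Delta_L + 4\Delta_S) \ge t + 4\Delta_S + \Delta_L$. Finally, once $C_e(B_k)$ is the only certificate in epoch $e$, each honest replica starts $timeoutCommit(e) = 2\Delta_S$, and when it expires with $epochsState[e] = \ready$, it commits $B_k$ via lines~\ref{line:tab:commit-rule-1}--\ref{line:tab:commit-rule-1-end} of Algorithm~\ref{alg:alter}.
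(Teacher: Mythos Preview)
Your proposal is correct and follows essentially the same approach as the paper's proof: anchor on the earliest honest entry time $t$, use epoch synchronization to bound the leader's entry by $t+\Delta_S$, account for $timeoutEpochChange=2\Delta_S$ so the proposal is broadcast by $t+3\Delta_S$, add $\Delta_L$ for proposal delivery and $\Delta_S$ for vote delivery to get the certificate by $t+4\Delta_S+\Delta_L$, then observe that this matches $timeoutCertificate$ so no \silence\ is sent and (by leader honesty) no $C_e(\double)$ exists. One small remark: in your ``main obstacle'' paragraph the binding case is the \emph{earliest} entrant (whose timer fires first at $t+4\Delta_S+\Delta_L$), not the latest; your inequality still covers it, but the framing is inverted.
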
 
  \begin{proof}
	Consider an epoch $e$ with an honest leader $l$, occurring after GST. Let $t > GST$ be the time when the first honest replica starts epoch $e$. 
	By Lemma \ref{lemma:alter-epoch-advancement}, all honest replicas enter epoch $e$ by time $t + \Delta_S$. Consequently, they all broadcast their 
	$lockedBC$ by time $t + \Delta_S$ at the latest. As a result, $l$ will receive certificates from all honest replicas by time $t + 2\Delta_S$. 
	This is why $l$ needs to wait for $timeoutEpochChange(e) = 2\Delta_S$ after entering the epoch if it does not know the certificate from the 
	previous epoch, to update its $lockedBC$ to the most recent certificate (lines \ref{line:tab:timeout-epoch-change-start-alter} and 
	\ref{line:tab:epoch-change-update-start}--\ref{line:tab:block-cert-leader} in Algorithm \ref{alg:alter}). 
  
	Consequently, the honest leader $l$ broadcasts \mymsg{\propose,e,B_k,lockedBC_l}{} and \mymsg{\vote, e, id(B_k)}{l} by time $t + 3\Delta_S$ at 
	the latest. Since we are after GST, all honest replicas receive both messages within $\Delta_L$ time, by time $t + 3\Delta_S + \Delta_L$ and vote 
	for the proposal. The votes are of type \typeS and all honest replicas receive them within $\Delta_S$ time, form a block certificate, and start 
	$timeoutCommit(e)$ by time $t + 4\Delta_S + \Delta_L$. 
  
	Given that the earliest point when an honest replica entered epoch $e$ is $t$ and honest replicas set $timeoutCertificate(e)$ to expire in 
	$4\Delta_S + \Delta_L$, no honest replica will send a \mymsg{\silence,e}{*} message in epoch $e$, and $C_e(\silence)$ cannot be formed. 
	Furthermore, since $l$ is honest, it does not equivocate, so no $C_e(\double)$ can be formed in epoch $e$ either. 
  
	Consequently, when $timeoutCommit(e)$ expires, all honest replicas will commit $B_k$ and all its ancestors.
  \end{proof}

  \begin{theorem}(Liveness) All honest replicas keep committing new blocks.  
  \label{theorem:liveness}
  \end{theorem}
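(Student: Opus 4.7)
The plan is to combine the epoch synchronization result (Theorem on continuous epoch advancement) with the per-epoch commit guarantee (Lemma~\ref{lemma:alter-commit-epoch}) and the leader rotation property of \alterbft. First, I would invoke the epoch synchronization theorem to conclude that every honest replica enters infinitely many epochs $e_0, e_1, e_2, \ldots$, with the additional property that they all enter each such epoch within $\Delta_S$ of one another. This rules out the possibility that liveness fails because some replica is simply stuck.

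Next, I would argue that infinitely many of these epochs have an honest leader that occurs after GST. Since \alterbft rotates leadership (either through a deterministic function or via a random oracle) and at most $f$ out of $n > 2f$ replicas are Byzantine, a majority of all epochs are led by honest replicas. Because GST is finite and replicas continue advancing through epochs forever, there exist infinitely many epochs $e$ with an honest leader and $e > \mathrm{GST}$ (i.e., the first honest replica enters $e$ at some time $t > \mathrm{GST}$).

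Finally, I would apply Lemma~\ref{lemma:alter-commit-epoch} to each such epoch: under these conditions, every honest replica commits the block proposed in epoch $e$, and hence (since blocks are chained via hash pointers) every ancestor block that has not yet been committed. Because this happens in infinitely many epochs, every honest replica commits infinitely many new blocks, which is exactly the liveness property.

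The main obstacle I anticipate is the leader rotation argument: strictly speaking, Lemma~\ref{lemma:alter-commit-epoch} assumes the leader of epoch $e$ is honest, so I must ensure the rotation scheme produces honest leaders infinitely often after GST. For a deterministic round-robin schedule this is immediate, but if the rotation uses a random oracle one would need a probabilistic ``with probability 1'' statement. The other subtle point is that Lemma~\ref{lemma:alter-commit-epoch} requires the \emph{first} honest replica to enter the epoch after GST, not merely that the epoch number be large; this is handled by noting that since epochs advance and GST is a fixed finite time, all but finitely many epochs satisfy this condition.
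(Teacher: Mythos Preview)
Your proposal is correct and follows essentially the same approach as the paper: invoke the epoch-synchronization theorem to get unbounded epoch advancement, observe that after GST infinitely many epochs have honest leaders, and then apply Lemma~\ref{lemma:alter-commit-epoch} to conclude that honest replicas commit in each such epoch. Your treatment is in fact more careful than the paper's, which leaves the leader-rotation argument and the ``first honest replica enters after GST'' subtlety implicit.
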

  \begin{proof}
  By the Theorem \ref{theorem:alter-continuous-epochs} replicas move through epochs. Eventually, after $GST$,
  replicas will reach epochs with honest leaders.  
  Consequently, by the Lemma \ref{lemma:alter-commit-epoch} all honest replicas will commit blocks in these epoch. 
  \end{proof}

  \subsubsection{Block Availability}
  \alterbft allows replicas to commit a block $B_k$ before receiving the actual block (line \ref{line:tab:commit-rule-1} and \ref{line:tab:commit-rule-2} in Algorithm 1). In this section we prove 
  that the protocol ensures $B_k$ and all its ancestors blocks will eventually be received by all honest replicas.  
  
  \begin{lemma}
	\label{lemma:certified-ancestors}
	Every block $B_k$ (where $k \neq 0$) proposed by an honest replica in some epoch $e$ has, as its ancestors, blocks that have been certified in one of the epochs $e' < e$.       
  \end{lemma}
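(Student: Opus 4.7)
My plan is to establish a slightly stronger auxiliary claim and then specialize it to honest proposals: \emph{every certified block $B_{k'}$ at height $k' \geq 1$, certified in some epoch $e'$, has all of its ancestors certified in strictly earlier epochs.} Granting this, the lemma will follow directly from the honest-leader proposal logic. The case $k = 1$ of the lemma itself is vacuous, since $B_1 = (b_1, \bot)$ has no predecessor.

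I will prove the auxiliary claim by induction on $k'$. The base case $k' = 1$ is vacuous. For $k' \geq 2$, certification of $B_{k'}$ in epoch $e'$ requires $f + 1$ signed votes, so at least one honest replica voted for $B_{k'}$'s proposal in epoch $e'$. That replica only votes after $valid(b)$ together with $condition_1$ or $condition_2$ (around line \ref{line:tab:condition-2} of Algorithm \ref{alg:alter}) succeed, and since $k' \geq 2$ the structural portion of $valid$ must force the accompanying certificate $BC$ to be non-$\nil$ and to satisfy $B_{k'}.prev = BC.id$. Hence $B_{k'-1}$ is certified by $BC$ in some epoch strictly smaller than $e'$, and the induction hypothesis applied to $B_{k'-1}$ gives the claim for $B_{k'}$.

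To conclude the lemma for $k \geq 2$, I will consider $B_k$ proposed by an honest leader $l$ in epoch $e$. The $getBlock$ routine (lines \ref{line:tab:getBlock}--\ref{tab:newValue}) sets $B_k.prev = lockedBC_l.id$, and $lockedBC_l$ is only assigned at line \ref{line:tab:update-lockedBC} upon receiving a block certificate, after which the replica immediately advances to the next epoch. So $lockedBC_l$ necessarily refers to a block $B_{k-1}$ certified in some epoch $e_1 < e$, and applying the auxiliary claim to $B_{k-1}$ at $e_1$ finishes the argument. The main obstacle I anticipate is making explicit where the structural chain check $b.prev = BC.id$ lives in the protocol, since $condition_1$ and $condition_2$ only constrain the recency of $BC$; this check must be absorbed into the $valid(b)$ predicate (consistent with the block format introduced in Section \ref{sec:problem-def}), after which the rest is routine bookkeeping over Algorithm \ref{alg:alter}.
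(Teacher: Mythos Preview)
Your proposal is correct and rests on the same two protocol facts the paper invokes: the honest leader's proposal carries $lockedBC_l$ as the certificate for the predecessor (line~\ref{line:tab:propose}), and the $valid()$ predicate enforces $b.prev = BC.id$ (line~\ref{line:tab:alter-valid-check}). The paper's proof is a two-sentence pointer to these lines and only explicitly treats the \emph{immediate} predecessor, leaving the extension to all ancestors implicit. Your auxiliary claim about certified blocks, proved by induction on height, makes that recursion explicit and is arguably the cleaner way to justify the ``all ancestors'' phrasing of the lemma; the paper essentially relies on the reader to unfold the same induction. So the route is not genuinely different---same key facts, same structure---just more carefully spelled out.

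One small inaccuracy worth fixing: you assert that $lockedBC_l$ is \emph{only} assigned at line~\ref{line:tab:update-lockedBC}. It is also assigned at line~\ref{line:tab:update-lockedBC-leader}, when the leader, while waiting on $timeoutEpochChange$, receives a block certificate from an epoch other than $e_p$. This does not break your conclusion---in either case the certificate is for an epoch strictly smaller than the epoch in which $l$ proposes---but your justification (``after which the replica immediately advances to the next epoch'') applies only to the first assignment site, so you should account for both.
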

  \begin{proof}
	The proof for this lemma directly follows from Algorithm \ref{alg:alter}. Specifically, a leader $l$ of epoch $e$ that proposes block $B_k$, which extends some block $B_l$, must provide a valid block certificate for block $B_l$ from some epoch $e' < e$ (line \ref{line:tab:propose} in Algorithm \ref{alg:alter}). 
  
	Furthermore, an honest replica will only vote for $B_k$ if $B_k.prev = id(B_l)$, a check that is part of the $valid()$ function (line \ref{line:tab:alter-valid-check} in Algorithm \ref{alg:alter}).
  \end{proof}

  \begin{theorem}(Block availability)
	All blocks committed by honest replicas will eventually be received by all honest replicas.
  \end{theorem}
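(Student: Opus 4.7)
The plan is to combine three ingredients: (i) the fact that every directly committed block is certified, (ii) the forwarding rule executed by each honest voter in the vote phase, and (iii) Lemma~\ref{lemma:certified-ancestors}, which links committed blocks to a chain of certified ancestors. Reliability of point-to-point links between honest replicas, assumed in Section~\ref{sec:sysmodel}, is the mechanism that ultimately delivers blocks, without requiring any timing assumption.

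First, I would argue that if an honest replica $r$ commits a block $B_k$ in some epoch $e$, then $B_k$ is certified in $e$. Both commit rules require receiving either a block certificate $C_e(B_k)$ (regular rule) or votes for $B_k$ from all replicas (fast rule, which trivially yields $C_e(B_k)$). In either case, at least one of the $f+1$ signers of $C_e(B_k)$ is honest; call it $q$. By the protocol, $q$ only votes for $B_k$ after receiving and validating the full proposal $\li{\propose,e,B_k,\cdot}$, and immediately forwards that proposal to all replicas (line~\ref{line:tab:forward-proposal} of Algorithm~\ref{alg:alter}). Since honest-to-honest channels are reliable, every honest replica eventually receives the full $B_k$.

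Second, I would handle ancestors by induction on the chain of committed blocks. Committing $B_k$ also commits all of its ancestors. By Lemma~\ref{lemma:certified-ancestors}, every ancestor $B_j$ with $j \neq 0$ was certified in some earlier epoch $e_j < e$, so the same argument applied to $B_j$ shows that at least one honest voter in $e_j$ forwarded the full proposal for $B_j$, which then reaches every honest replica by reliability. The base case is the genesis block $B_1=(b_1,\bot)$, which is trivially available.

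The main obstacle I anticipate is the subtlety that a replica may commit $B_k$ without ever having seen $B_k$ itself at commit time (it holds only $id(B_k)$); the theorem is precisely the statement that this gap is always eventually filled. The proof closes that gap by showing the existence of an honest forwarder among the certificate's signers, so the critical step is the careful invocation of the forwarding line together with the quorum argument; everything else is a straightforward induction along the ancestor chain using Lemma~\ref{lemma:certified-ancestors}.
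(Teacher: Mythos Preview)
Your proposal is correct and follows essentially the same route as the paper: committed blocks (and, via Lemma~\ref{lemma:certified-ancestors}, their ancestors) are certified, certification implies at least one honest voter, and that honest voter forwards the full proposal at line~\ref{line:tab:forward-proposal}. The only minor difference is the delivery mechanism you invoke: you rely on the reliable-links assumption of Section~\ref{sec:sysmodel} to conclude eventual receipt, whereas the paper appeals to the type~\typeL timing property, obtaining the explicit bound $\max\{t,GST\}+\Delta_L$; for the theorem as stated (``eventually''), either suffices.
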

  \begin{proof}
	Assume an honest replica $r$ commits block $B_k$. We know that $B_k$ must be certified before being committed 
	(lines \ref{line:tab:start-timeout-commit-alter}, \ref{line:tab:commit-rule-1}, and \ref{line:tab:commit-rule-2} in Algorithm \ref{alg:alter}). 
	By Lemma \ref{lemma:certified-ancestors}, all of $B_k$'s ancestors are also certified blocks.
  
	For a block to be certified, at least one honest replica must vote for it. Additionally, an honest replica, along with the vote, 
	forwards the proposal (line \ref{line:tab:forward-proposal} in Algorithm \ref{alg:alter}). Consequently, if a block is certified at time $t$, 
	at least one honest replica forwards the proposal before time $t$. 
  
	Since the \propose\ message is of type \typeL, we know, by the communication properties of type \typeL messages (Section \ref{prop-big-msg}),
	 that it will be received by all honest replicas before $\textit{max}\{t, GST\} + \Delta_L$.
  \end{proof}
  
  \subsubsection{External Validity}
\alterbft ensures all committed blocks are valid. 
  \begin{theorem}(External validity)
	Every committed block satisfies the predefined \emph{valid()} predicate.
\end{theorem}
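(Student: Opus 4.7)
The plan is to reduce external validity to two facts already in hand: that every committed block is certified (possibly indirectly, as an ancestor), and that certification entails at least one honest vote, which in turn entails a successful \emph{valid()} check. Concretely, suppose an honest replica $r$ commits a block $B_k$. Either $B_k$ is committed directly via one of the two commit rules (lines \ref{line:tab:commit-rule-1}--\ref{line:tab:commit-rule-1-end} or \ref{line:tab:commit-rule-2}--\ref{line:tab:commit-rule-2-end} in Algorithm \ref{alg:alter}), or $B_k$ is committed indirectly as an ancestor of some directly committed block $B_l$ (via $CommitBlock$, line \ref{line:tab:alter-commit-value}). I would handle these two cases in turn.

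For the direct case, both commit rules require $r$ to have previously obtained a block certificate $C_e(B_k)$ for $B_k$ in some epoch $e$ (either explicitly through $f+1$ \vote\ messages in the fast rule, or via the block certificate that started $timeoutCommit(e)$ in the regular rule). Since $|C_e(B_k)| = f+1$ and at most $f$ replicas are Byzantine, at least one honest replica $h$ must have cast a \vote\ for $B_k$ in epoch $e$. However, by inspection of line \ref{line:tab:alter-valid-check} in Algorithm \ref{alg:alter}, an honest replica only broadcasts its vote for $B_k$ after checking $valid(B_k) = true$. Hence $valid(B_k)$ holds.

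For the indirect case, $B_k$ is an ancestor of a directly committed block $B_l$. By Lemma \ref{lemma:certified-ancestors}, every ancestor of a block proposed by an honest replica is certified in some earlier epoch; combined with the fact that $B_l$ itself is certified (by the direct case argument) and that honest replicas only vote for a proposal whose predecessor hash matches their locked certificate, an inductive walk down the chain from $B_l$ to $B_k$ shows that each ancestor, including $B_k$, is itself certified in some epoch. Reapplying the direct-case argument to $B_k$'s certificate then yields $valid(B_k)$.

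I do not expect any serious obstacle here: the proof is essentially a one-step unfolding of the definitions combined with Lemma \ref{lemma:certified-ancestors}. The only subtle point to get right is that indirect commits must also be covered, since a committed block may never have been voted on by $r$ itself; the honest quorum intersection with the certifying set, rather than $r$'s own behavior, is what supplies the \emph{valid()} check. Once that observation is made explicit, the theorem follows immediately.
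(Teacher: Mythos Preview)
Your proposal is correct and follows essentially the same approach as the paper's proof: both argue that any committed block (direct or indirect) is certified, invoke Lemma~\ref{lemma:certified-ancestors} to extend this to all ancestors, and then observe that a certificate of size $f{+}1$ must contain at least one honest vote, which by line~\ref{line:tab:alter-valid-check} entails a successful \emph{valid()} check. Your explicit direct/indirect case split is more detailed than the paper's compressed version, but the substance is identical.
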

\begin{proof}
	We know that block must be certified before being committed 
	(lines \ref{line:tab:start-timeout-commit-alter}, \ref{line:tab:commit-rule-1}, and \ref{line:tab:commit-rule-2} in Algorithm \ref{alg:alter}). 
	By Lemma \ref{lemma:certified-ancestors}, all of $B_k$'s ancestors are also certified blocks.
	This implies that at least one honest replica accepted these blocks, meaning that $valid()$ returned true for these blocks on at least one honest replica (line \ref{line:tab:alter-valid-check} in Algorithm \ref{alg:alter}). 
\end{proof}
  
  \section{Synchronous Bound $\Delta$}
  \label{appendix:bounds}
  
  This section details how we calculated the synchronous bound $\Delta$ used in our experiments (Section \ref{sec:eval-perf}). The $\Delta$ is based on the 
  delays collected and presented Appendix \ref{sec:appendix-msg-delays}.
  We used 99.99\% percentile latency of collected values as the conservative bound \cite{xft}. Table \ref{tbl:delta} shows 99.99 \% percentile 
  latency for messages of different sizes. We can see that, as we increase the message size, the 99.99\% percentile latency increases. 
  
  Synchronous bound $\Delta$ of classical synchronous protocols needs to account for all messages (large and small). Consequently, we calculated the size of the  
  largest message transmitted in Sync HotStuff protocol in all setups considered: different block and system sizes. Moreover, we adopt $\Delta$ to account 
  for messages of maximum size. Table \ref{tbl:sync-hot-stuff-delta} shows the $\Delta$ and the message size it accounts for, which we used 
  when running Sync HotStuff in our evaluation. We can see that $\Delta$ increases with block size because it needs to account for 
  messages carrying blocks. Specifically, the largest message sent in Sync HotStuff is a proposal message that contains both the block and the certificate \cite{rot-sync-hotstuff}. 
  As a result, in Table \ref{tbl:sync-hot-stuff-delta} for a block size of 1 KB and system size of 25 replicas, the proposal message is of size 2 KB: 
  1 KB (block size) + 1 KB (certificate size).

  \begin{table*}[!ht]
    \footnotesize
    \centering
  \begin{tabular}{|c|c|c|c|c|c|c|c|c|c|c|c|c|} \hline
  \textbf{Message size (KB)}& 1  & 2  & 3  & 4  & 8  & 16  & 32  & 64  & 128  & 256  & 512  & 1024  \\ \hline
  \textbf{99.99 \% (ms)	}		& 254  & 273  & 308 & 325  & 514 & 663 & 995 & 2594 & 2825 & 3935 & 5080 & 6099 \\ \hline
  
  \end{tabular}
  \caption{99.99 \% percentile of collected message delays for different message sizes. 99.99 \% percentile is based on values collected during one day long experiments.}
    \label{tbl:delta}
  \end{table*}
  
  \begin{table*}[h!]
  \footnotesize
  \centering
  \begin{tabular}{|c|c|c|c|c|c|c|c|c|c|c|c|} \hline
  \textbf{Block size (KB)}	& \textbf{1}  & \textbf{2}  & \textbf{4}  & \textbf{8}  & \textbf{16}  & \textbf{32}  & \textbf{64}  & \textbf{128}  & \textbf{256}  & \textbf{512}  & \textbf{1024}   \\ \hline
  \textbf{25 replicas}		& \makecell{273 ms \\ (2 KB)} & \makecell{308 ms \\ (3 KB)} & \makecell{325 ms \\ (5 KB)} & \makecell{514 ms \\ (9 KB)} & \makecell{663 ms \\ (17 KB)} & \makecell{995 ms \\ (33 KB)} & \makecell{2594 ms \\ (65 KB)} & \makecell{2825 ms \\ (129 KB)} & \makecell{3935 ms \\ (257 KB)} & \makecell{5080 ms \\ (513 KB)} & \makecell{6099 ms \\ (1025 KB)} \\ \hline
  \textbf{85 replicas}			& \makecell{325 ms \\ (4 KB)} & \makecell{325 ms \\ (5 KB)} & \makecell{514 ms \\ (7 KB)} & \makecell{514 ms \\ (11 KB)} & \makecell{663 ms \\ (19 KB)} & \makecell{995 ms \\ (35 KB)} & \makecell{2594 ms \\ (67 KB)} & \makecell{2825 ms \\ (131 KB)} & \makecell{3935 ms \\ (259 KB)} & \makecell{5080 ms \\ (515 KB)} & \makecell{6099 ms \\ (1027 KB)} \\ \hline
  
  \end{tabular}
  \caption{Sync HotStuff's synchronous conservative bound $\Delta$ for different block and system sizes. Table's fields show: $\Delta$ (message size it accounts for).}
  \label{tbl:sync-hot-stuff-delta}
  \end{table*}
  
  \begin{table*}[h!]
  \centering
  \footnotesize
  \begin{tabular}{|c|c|c|c|c|c|c|c|c|c|c|c|} \hline
  \textbf{Block size (KB)}& \textbf{1}  & \textbf{2}  & \textbf{4}  & \textbf{8}  & \textbf{16}  & \textbf{32}  & \textbf{64}  & \textbf{128}  & \textbf{256}  & \textbf{512}  & \textbf{1024}   \\ \hline
  \textbf{25 replicas}		& \multicolumn{11}{c|}{254 ms (1 KB)} \\ \hline
  \textbf{85 replicas}		& \multicolumn{11}{c|}{308 ms (3 KB)} \\ \hline
  
  \end{tabular}
  \caption{\alterbft's synchronous conservative bound $\Delta_S$ for different block and system sizes. Table's fields show: $\Delta_S$ (message size it accounts for).}
  \label{tbl:alter-delta}
  \end{table*}
  
  \alterbft differentiates between two synchronous bounds: $\Delta_S$ and $\Delta_L$. $\Delta_S$ needs to account for 
  type \typeS messages and needs to hold always, while $\Delta_L$ accounts for type \typeL messages and needs to hold only eventually. 
  The messages carrying blocks in \alterbft are of type \typeL, and since their timely delivery is needed only for progress, we can use 
  less conservative bounds. In our experiments, we used 99\% percentile latency. Moreover, from Table \ref{tbl:competition}, we can see that 
  $\Delta_L$ does not affect \alterbft's latency. Conversely, latency is affected by $\Delta_S$.
  Since timely delivery of type \typeS messages is needed for safety, we used 99.99\% percentile latency for $\Delta_S$. 
  However, since type \typeS messages are small (up to 3 KB in our setups), the values adopted for $\Delta_S$ are much smaller. 
  Table \ref{tbl:alter-delta} shows the values we used for \alterbft's $\Delta_S$ in our experiments.

\section{Additional Results}
\label{sec:appendix-additional-results}

In this section, we present additional experimental results that, while excluded from the main paper due to space constraints, we believe are important.

\subsection{Performance under Attack}
\label{sec:perf-attack}

We now evaluate \alterbft and \fastalter under equivocation attacks.  
In the equivocation attack, the Byzantine leader of an epoch sends one proposal to half of the replicas and another proposal to the other half; Byzantine replicas vote for both proposals. 
Figure \ref{fig:attacks} presents the data for a system of 
25 replicas with 128 KB blocks while varying the number of Byzantine replicas from 2 to 12.  




\begin{figure}[h]
    \centering
    \includegraphics[width=0.8\columnwidth]{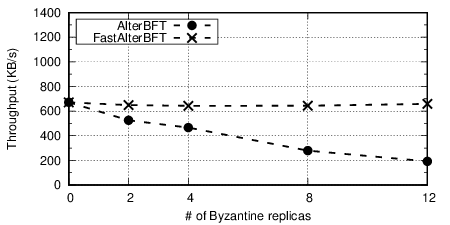}
    \includegraphics[width=0.8\columnwidth]{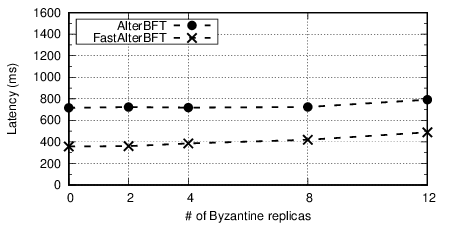}
	\caption{\alterbft and \fastalter throughput (top) and latency (bottom) under equivocation attack, 25 replicas and 128 KB blocks.}
	\label{fig:attacks}
\end{figure}

\fastalter's throughput is not much affected by equivocation attacks due to the chaining mechanism.
Honest replicas will not commit a block in epochs with a Byzantine leader, but 
if they gather a certificate for one of the two blocks proposed, in epochs with an honest leader, the leader will extend and indirectly commit one of these blocks.
Since in \fastalter replicas wait for 2$\Delta_S$ after receiving an equivocation certificate, they always receive a certificate for one of the blocks 
before moving to the next epoch. As a result, the throughput is almost identical to that in the failure-free case. 
In \alterbft, replicas move to the next epoch immediately after receiving the equivocation certificate. 
Consequently, blocks proposed by Byzantine leaders are often wasted. 
As we increase the number of Byzantine replicas, more epochs are wasted, and \alterbft's overall throughput decreases. 

Moreover, the equivocation attack does not have a significant effect on the latency of the protocols. With 2 Byzantine replicas, the latencies stay
the same. As we increase the number of faulty replicas to 4, 8 and 12, the latency of \alterbft increases by 2\%,2\%, and 7.5\%, respectively, while the latency of 
\fastalter increases by 6.5\%, 15.7\%, and 33\%. The increase is because blocks proposed by Byzantine replicas will not be committed 
in the epochs in which they were proposed but in the first epoch with an honest leader. Since in \fastalter we have more such blocks than in \alterbft, the impact on average latency is more significant. 

Lastly, we can see that mandatory 2$\Delta_S$ delay of \fastalter has an overall positive effect on performance when the equivocation attack is in place. 
Together with the benefits presented in failure-free case (see Section \ref{sec:eval-perf}), this result serves as a compelling argument for \fastalter adoption.

\subsection{Design Alternatives}
\label{sec:alternatives}

In this section, we evaluate possible alternatives for the hybrid model and \alterbft. 
We consider two approaches for synchronous consensus protocols that build on the fact that small messages have reduced and more stable communication delays than large messages (see Section \ref{sec:key-observation}): 
\begin{enumerate}
    \item Limiting the size of values that are ordered in an instance of consensus to a few thousand bytes (i.e., small messages). In this case, synchrony bound needs to account for 
    small messages only but multiple consensus instances are needed to order blocks bigger than the chosen value size.
     \item Sending every large message as many small messages. A large block can use a single instance of consensus in this case, but a replica can only act on a large block after it has received all smaller messages that correspond to the original block.
   
\end{enumerate}

We evaluate the first alternative approach described above and compare it to \alterbft and to Sync HotStuff, where large blocks require conservative synchrony bounds. 
More precisely, we measure the throughput and latency of Sync HotStuff, where to order a 128 KB block
the leader uses 64 consensus instances.
In each instance, the leader proposes a 2 KB chunk (Chunked-HS). We compare it to Sync HotStuff, where
a leader uses one consensus instance but sets a conservative synchrony bound (Sync HotStuff).
In these experiments, we use the original Sync HotStuff \cite{sync-hotstuff} with a stable leader since it is
unclear how the technique could be used with a rotating leader (i.e., how would every leader know which block chunk to propose?).
Figure \ref{fig:chunked} shows results for 25 replicas.

\begin{figure}[h]
    \centering
    \includegraphics[width=0.85\columnwidth]{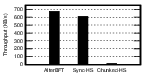}
    \includegraphics[width=0.85\columnwidth]{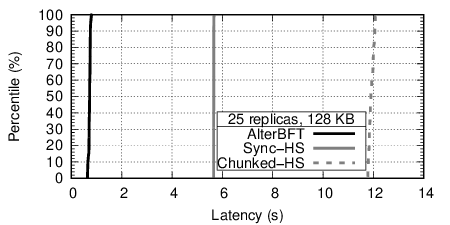}
	\caption{Performance comparison of synchronous consensus 
    with chunked proposals (Chunked-HS), conservative bounds (Sync-HS), and \alterbft for
    25 replicas with 128 KB blocks.}
	\label{fig:chunked}
\end{figure}

Chunked-HS performs worse than Sync HotStuff with a conservative 
bound: It has 2$\times$ higher latency and 55$\times$ lower throughput. 
The reason behind this lies in the overhead of additional consensus executions.
Even though Sync HotStuff starts multiple instances in parallel, it cannot start
the next instance before certifying the proposal of the current instance, so it needs
two communication steps before starting a new instance. 
In conclusion, empirical evidence suggests that consensus protocols are better off combining small and large messages, instead of resorting to small messages only.




\begin{figure}[h]
    \centering
    \includegraphics[width=0.85\columnwidth]{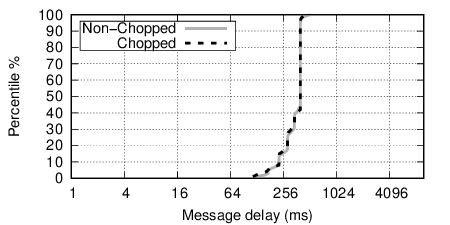}
	\caption{Message delays between N. Virginia and S. Paulo (x-axis in log scale) when sending 128 KB messages (Non-Chopped) versus sending 64 2 KB messages (Chopped).}
	\label{fig:chunked-2}
\end{figure}

To evaluate the second alternative approach described above, we repeated the same experiments used in Section \ref{sec:key-observation} to collect message delays between different AWS regions, but instead of sending one large message,
we divided the messages into small messages and measured the time needed for those small messages to reach their destination and a response to come back (i.e., round-trip time). 
Figure \ref{fig:chunked-2} compares message delays between N. Virginia and S. Paulo when sending one 128 KB message as a whole (Non-Chopped) and sending 64 2 KB messages (Chopped).
We can see that delays observed are almost identical. 
We conclude that chopping large messages into small messages does not reduce communication delays.
Therefore, a large message chopped into small messages is subject to the timeouts of large messages.

\subsection{Message Communication Delays}
\label{sec:appendix-msg-delays}

Figures on pages 24--28 display the message delays between servers deployed across different geographical regions on AWS and DigitalOcean. 
Table \ref{tbl:regions} lists the server numbers, their locations, and the respective providers. In the graphs, the server numbers correspond 
to specific instances based on their location and provider.

The results reveal a consistent trend across all regions, confirming the key observation discussed in Section \ref{sec:key-observation}: 
smaller messages, up to 4 KB, exhibit lower and more stable delays, with the difference becoming more pronounced as message size increases.

  \begin{table}[hb]
    \centering
  \begin{tabular}{|c|c|c|} \hline
  \textbf{Server \#} & \textbf{Location} & \textbf{Provider} \\ \hline
  0 & North Virginia & AWS \\ \hline
  1 & Sao Paulo & AWS \\ \hline
  2 & Stockholm & AWS \\ \hline
  3 & Singapore & AWS \\ \hline
  4 & Sydney & AWS \\ \hline
  5 & New York & DigitalOcean \\ \hline
  6 & Toronto & DigitalOcean \\ \hline
  7 & Frankfurt & DigitalOcean \\ \hline
  8 & Singapore & DigitalOcean \\ \hline
  9 & Sydney & DigitalOcean \\ \hline
  
  \end{tabular}
  \caption{The server numbers, their locations, and their providers. }
    \label{tbl:regions}
  \end{table}
  
  \begin{figure*}[ht!]
    \center
    \includegraphics[width=\textwidth]{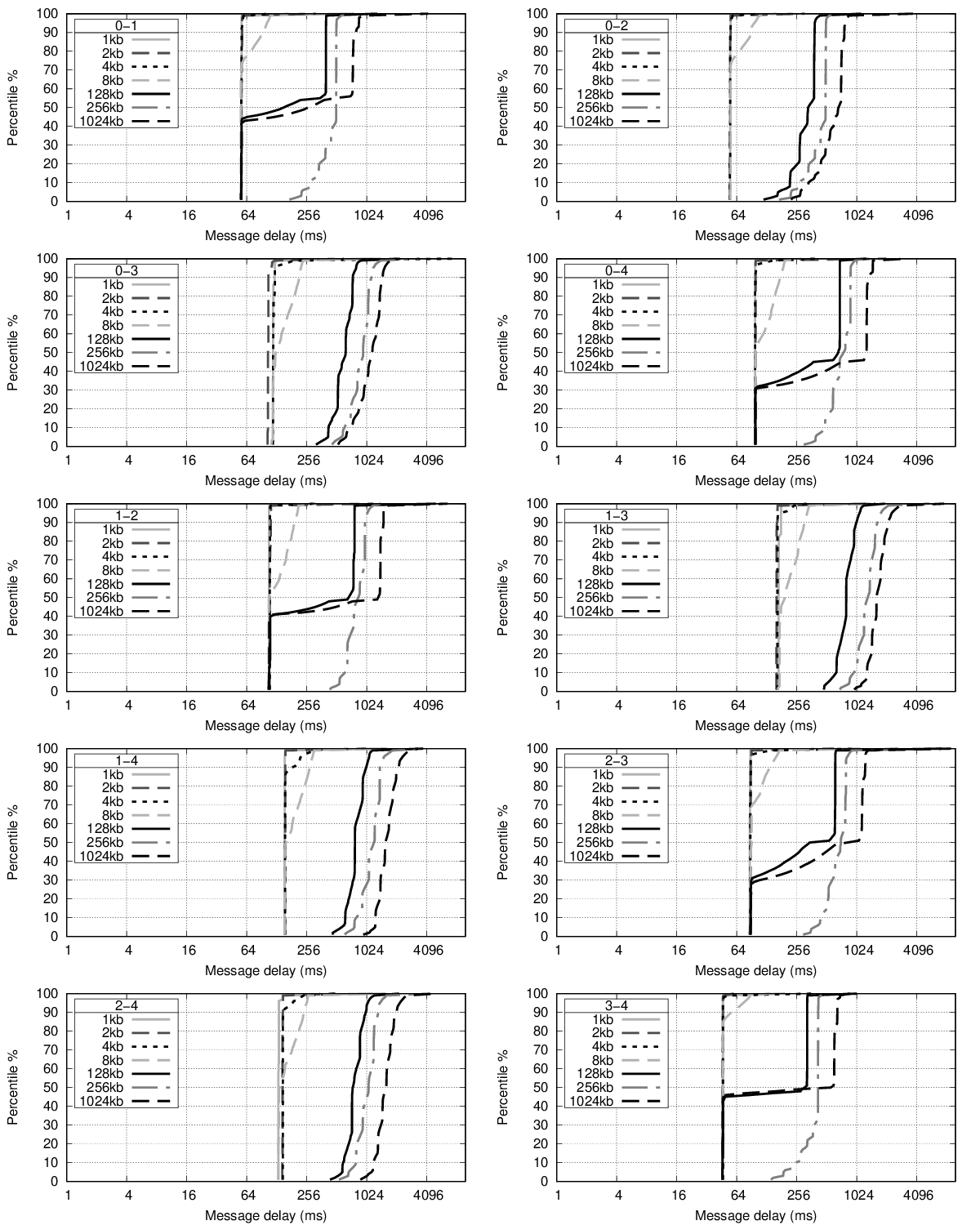}
    \vspace{4in}
    \label{fig:message-delays}
\end{figure*}

\begin{figure*}[ht!]
  \center
  \includegraphics[width=\textwidth]{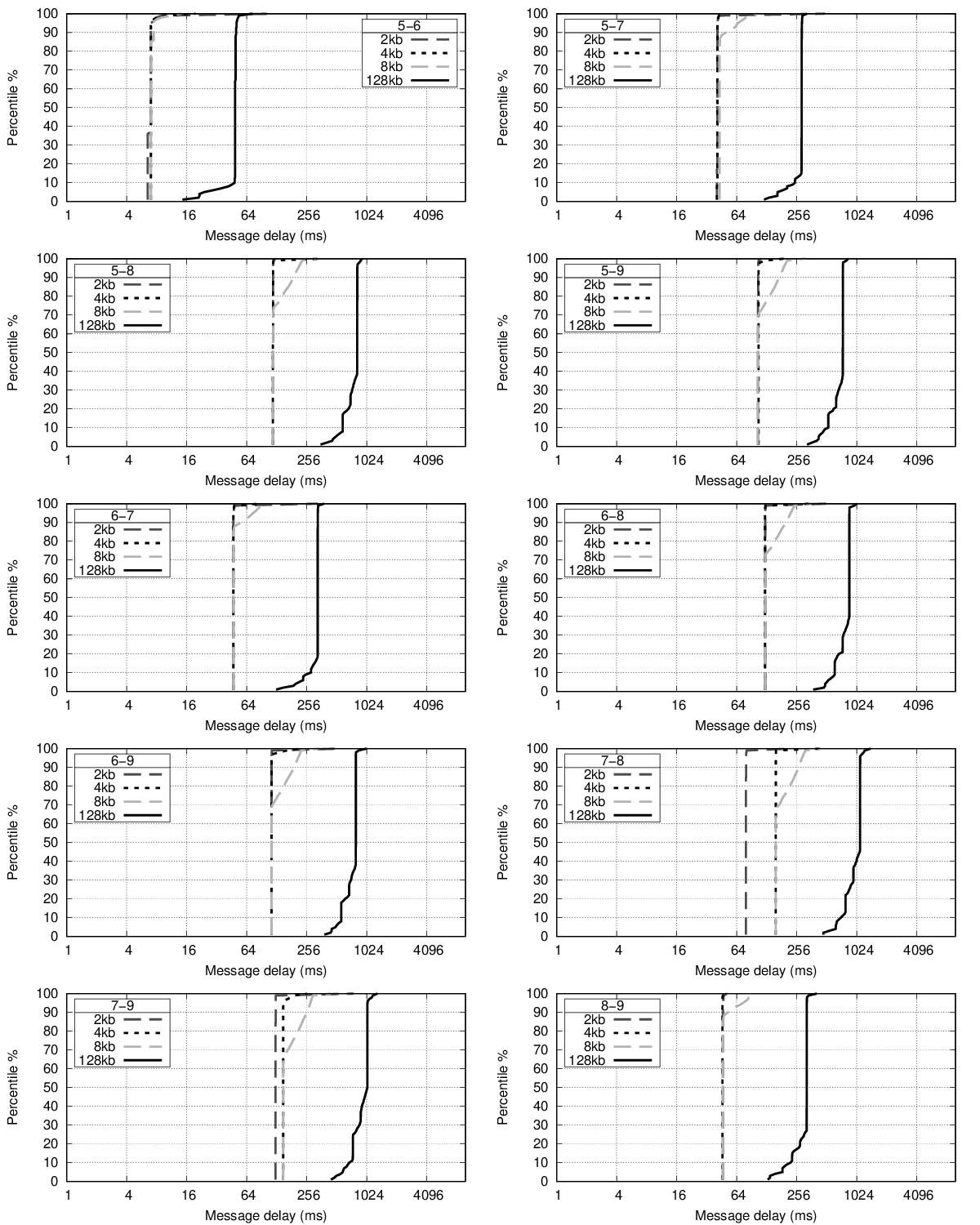}
  \vspace{4in}
  \label{fig:message-delays}
\end{figure*}

\begin{figure*}[ht!]
  \center
  \includegraphics[width=\textwidth]{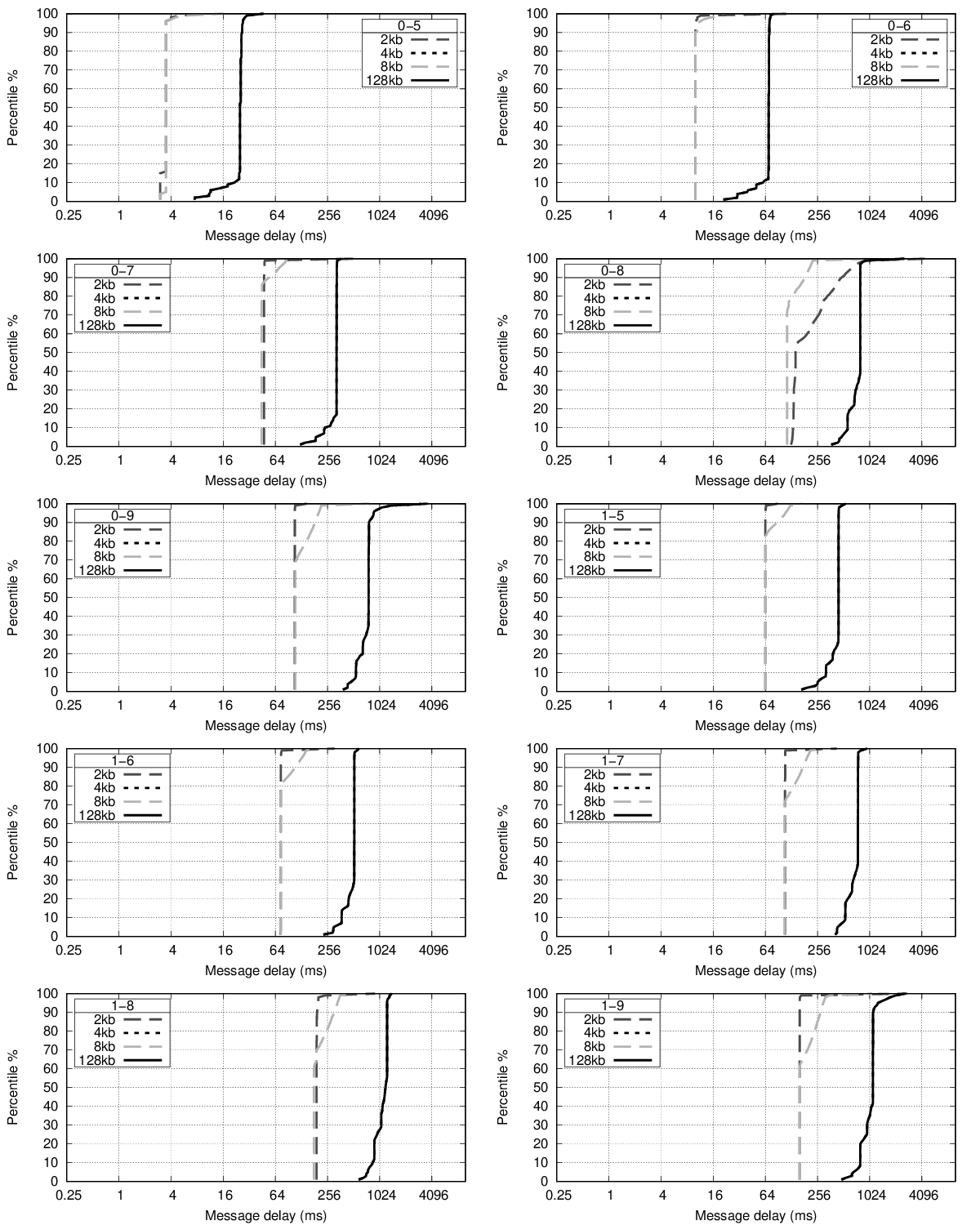}
  \vspace{4in}
  \label{fig:message-delays}
\end{figure*}

\begin{figure*}[ht!]
  \center
  \includegraphics[width=\textwidth]{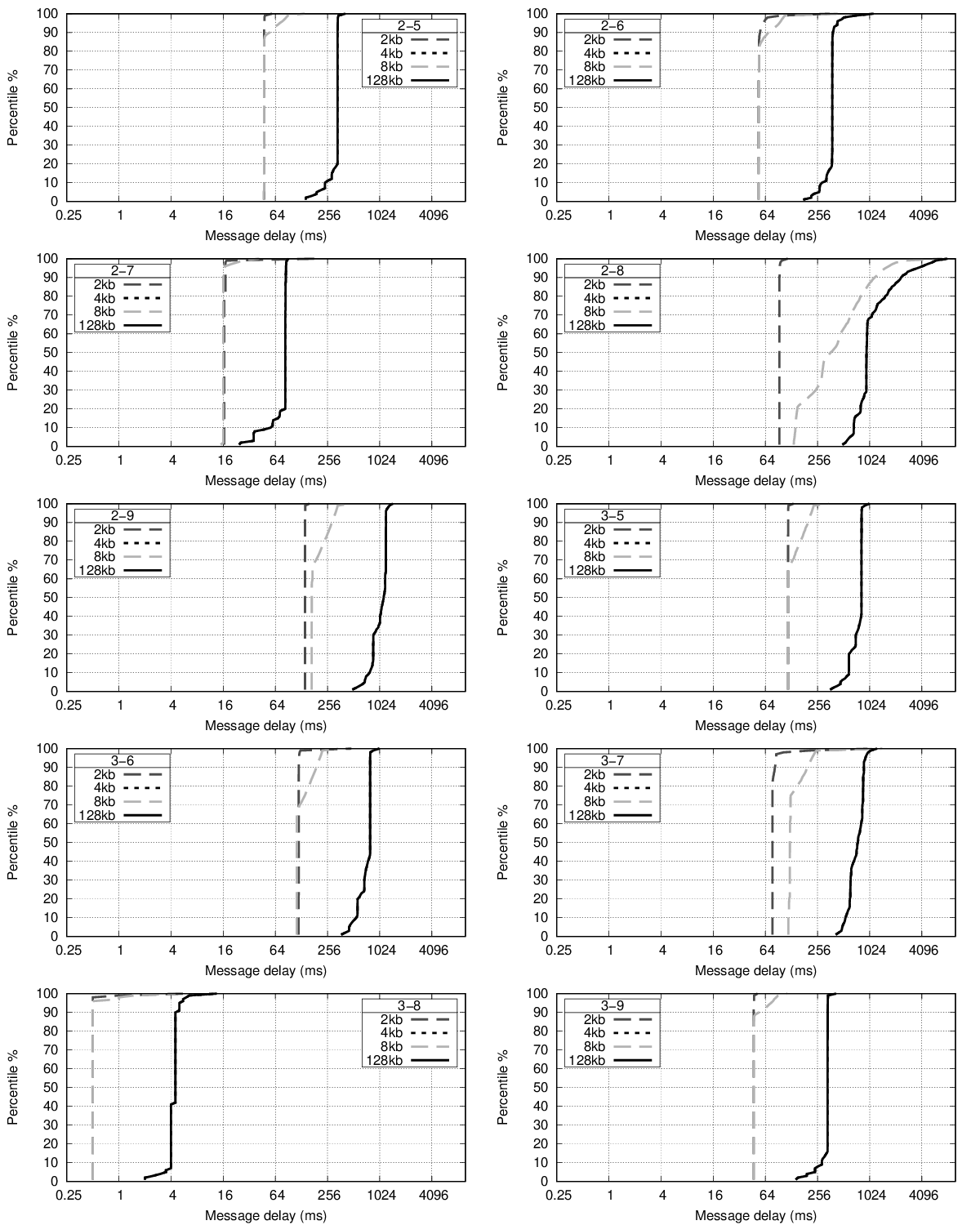}
  \vspace{4in}
  \label{fig:message-delays}
\end{figure*}

\begin{figure*}[ht!]
  \center
  \includegraphics[width=\textwidth]{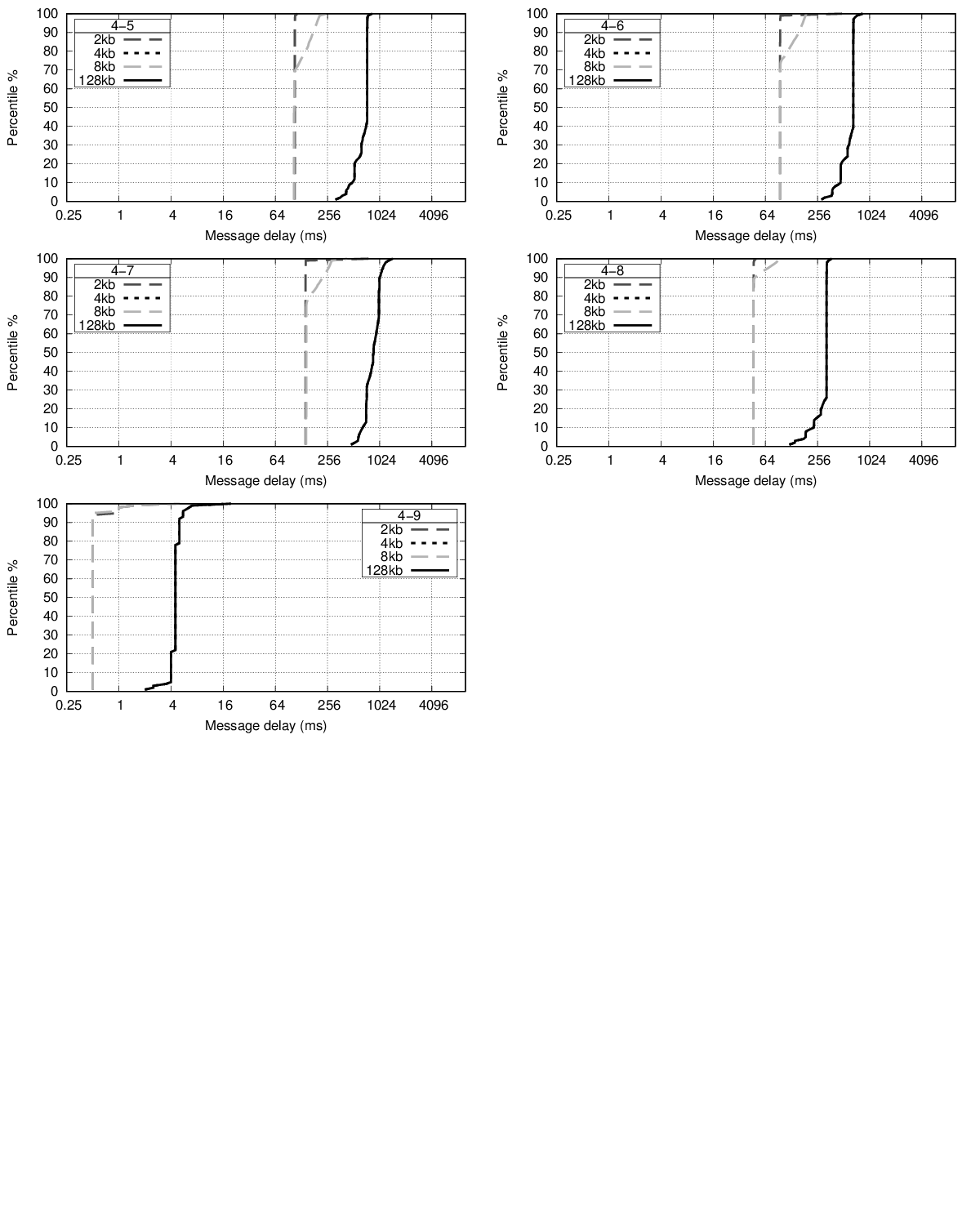}
  \vspace{4in}
  \label{fig:message-delays}
\end{figure*}

\end{document}